\renewcommand{\bar}{\overline}
\newcommand{\cA}{\mathcal A}
\newcommand{\ch}{{\mathrm{ch}}} 
\newcommand{\cl}{{\mathrm{closed}}} 
\newcommand{\clone}{{\mathrm{cl}}} 
\newcommand{\cM}{\mathcal M}
\newcommand{\cN}{\mathcal N}
\newcommand{\down}{{\mathord\downarrow}}
\newcommand{\Down}[1]{\down #1\setminus\{#1\}}
\newcommand{\fA}{\mathfrak A}
\newcommand{\fB}{\mathfrak B}
\newcommand{\FO}{\mathrm{FO}}
\renewcommand{\hat}{\widehat}
\newcommand{\mch}{{\mathrm{mch}}} 
\newcommand{\MSO}{{\mathrm{MSO}}}
\newcommand{\N}{\mathbb N}
\newcommand{\dkpath}{{\mathrm{reach}}} 
\newcommand{\red}{\mathrm{red}}
\newcommand{\w}{{\mathrm{w}}} 
\begin{document}
\allowdisplaybreaks{}

\title[Iterations and monadic second order logic]{Compatibility of
  Shelah and Stupp's and Muchnik's iteration with fragments
  of monadic second order logic}
\author{Dietrich Kuske}{Dietrich Kuske}
\address{Institut f\"ur Informatik, Universit\"at Leipzig}

\begin{abstract}
  We investigate the relation between the theory of the iterations in
  the sense of Shelah-Stupp and of Muchnik, resp., and the theory of
  the base structure for several logics. These logics are obtained
  from the restriction of set quantification in monadic second order
  logic to certain subsets like, e.g., finite sets, chains, and finite
  unions of chains. We show that these theories of the Shelah-Stupp
  iteration can be reduced to corresponding theories of the base
  structure. This fails for Muchnik's iteration.
\end{abstract}


\keywords{Logic in computer science, Rabin's tree theorem}
\subjclass{F.4.1}


\maketitle

\stacsheading{2008}{467-478}{Bordeaux}
\firstpageno{467}

\section{Introduction}

Rabin's tree theorem states, via an automata-theoretic proof, the
decidability of the monadic second order (short: $\MSO$) theory of the
complete binary tree. It allows to derive the decidability of
seemingly very different theories (e.g., the $\MSO$-theory of the real
line where set quantification is restricted to closed
sets~\cite{Rab69}).  Its importance is stressed by Seese's result that
any class of graphs of bounded degree with a decidable $\MSO$-theory
has bounded tree-width (i.e., is ``tree-like'')~\cite{See91}.

In \cite{She75}, Shelah reports a generalization of Rabin's tree
theorem that was proved by Shelah and Stupp. The idea is to start with
a structure $\fA$ and to consider the tree whose nodes are the finite
words over the universe of~$\fA$ together with the prefix order on
these words. Then the immediate successors of any node in this tree
can naturally be identified with the elements of the structure $\fA$
-- hence they carry the relations of $\fA$. The resulting tree with
additional relations is called \emph{Shelah-Stupp-iteration}. The
above mentioned result of Shelah and Stupp states that the
$\MSO$-theory of the Shelah-Stupp-iteration can be reduced to the
$\MSO$-theory of the base structure~$\fA$. If $\fA$ is the
two-elements set, then Rabin's tree theorem follows. 

A further extension is attributed to Muchnik~\cite{Sem84} who added a
unary clone predicate to Shelah and Stupp's iteration resulting in the
\emph{Muchnik-iteration}. This clone predicate states that the last
two letters of a word are the same. This allows, e.g., to define the
unfolding of a rooted graph in its Muchnik-iteration~\cite{CouW98}.
Muchnik's theorem then gives a reduction of the MSO-theory of the
Muchnik-iteration to the MSO-theory of the base structure. The proof
was not published by Muchnik himself, but, using automata-theoretic
methods, Walukiewicz showed that the reduction in Muchnik's theorem is
even uniform (i.e., independent from the concrete base
structure)~\cite{Wal02}. Since, as mentioned above, the unfolding of a
rooted graph can be defined in the Muchnik-iteration, the MSO-theory
of this unfolding can be reduced to that of the
graph~\cite{CouW98}. This result forms the basis for Caucal's
hierarchy~\cite{Cau02b} of infinite graphs with a decidable
MSO-theory. Walukiewicz's automata-theoretic proof ideas have been
shown to work for the Muchnik-iteration and stronger logics like
Courcelle's counting MSO and guarded second-order logic by Blumensath
\& Kreutzer~\cite{BluK05}.

In \cite{KusL06a}, we asked for a first-order version of Muchnik's
result -- and failed. More precisely, we constructed structures with a
decidable first-order theory whose Muchnik-iteration has an
undecidable first-order theory. As it turns out, the only culprit is
Muchnik's clone predicate since, on the positive side, we were able to
uniformly reduce the first-order theory (and even the monadic chain
theory where set variables range over chains, only) of the
Shelah-Stupp-iteration to the first-order theory of the base
structure.\footnote{In the meantime, Alexis Bes found a simpler proof
  of a stronger result based on the ideas of automatic structures and
  \cite{Tho87} (personal communication).} 

The aim of this paper is to clarify the role of weak monadic second
order logic~$\MSO^\w$ in the context of Shelah-Stupp- and
Muchnik-iteration.  We first define infinitary versions of these
iterations that contain, in addition to the finite words, also
$\omega$-words. On the positive side, we prove a rather satisfactory
relation between the theories of the infinitary Shelah-Stupp-iteration
and the base structure.  More precisely, the Shelah-Stupp result
together with some techniques from~\cite{Rab69} allows to uniformly
reduce the $\MSO^\cl$-theory of the infinitary Shelah-Stupp-iteration
(where set quantification is restricted to closed sets) to the
MSO-theory of the base set. Our result from \cite{KusL06a} ensures
that Shelah-Stupp-iteration is $\FO$-compatible in the sense of
Courcelle (i.e., the $\FO$-theory of the infinitary
Shelah-Stupp-iteration can be reduced uniformly to the $\FO$-theory of
the base structure). Our new positive result states that
Shelah-Stupp-iteration is also $\MSO^\w$-compatible. To obtain this
result, one first observes that the finiteness of a set in the
Shelah-Stupp-iteration is definable in $\MSO^\mch$ (where
quantification is restricted to finite unions of chains), hence the
$\MSO^\w$-theory of the Shelah-Stupp-iteration can be reduced to its
$\MSO^\mch$-theory.  For this logic, we then prove a result analogous
to Rabin's basis theorem: Any consistent $\MSO^\mch$-property in the
Shelah-Stupp-iteration of a finite union of chains (i.e., of a certain
set of words over the base structure) has a witness that can be
accepted by a small automaton. But an automaton over a fixed set of
states can be identified with its transition matrix, i.e., with a
fixed number of finite sets in the base structure. We then prove that
$\MSO^\mch$-properties of the language of an automaton can effectively
be translated into $\MSO^\w$-properties of the transition matrix.

On the negative side, we prove that infinitary Muchnik-iteration is
not $\MSO^\w$-com\-pat\-i\-ble. Namely, there is a tree $T_\omega$
with decidable $\MSO^\w$-theory such that for any set $M$ of natural
numbers, there exists an $\MSO^\w$-equivalent tree $\fA_M$ such that
$M$ can be reduced to the $\MSO^\w$-theory of the infinitary
Muchnik-iteration of~$\fA_M$. This proof uses the fact that the
existence of an infinite branch in a tree is not expressible
in~$\MSO^\w$, but it is a first-order (and therefore a $\MSO^\w$-)
property of the infinitary Muchnik-iteration.

\section{Preliminaries}

\subsection{Logics}

A \emph{(relational) signature} $\sigma$ consists of finitely many
constant and relation symbols (together with the arity of the latter);
a \emph{purely relational signature} does not contain any constant
symbols. Formulas use \emph{individual} and \emph{set variables},
usually denoted by small and capital, resp., letters from the end of
the alphabet.  \emph{Atomic formulas} are $x_1=x_2$,
$R(x_1,\dots,x_n)$, and $x_1\in X$ where $R$ is an $n$-ary relation
symbol from $\sigma$, $x_1,x_2,\dots,x_n$ are individual variables or
constant symbols, and $X$ is a set variable. \emph{Formulas} are
obtained from atomic formulas by conjunction, negation, and
quantification $\exists Z$ for $Z$ an individual or a set variable. A
\emph{sentence} is a formula without free variables. The satisfaction
relation $\models$ between a $\sigma$-structure~$\fA$ and formulas is
defined as usual. For two $\sigma$-structures $\fA$ and $\fB$, we
write $\fA\equiv^\MSO_m\fB$ if, for any sentence $\varphi$ of
quantifier depth at most~$m$, we have $\fA\models\varphi$ iff
$\fB\models\varphi$.  If $\fA$ and $\fB$ agree on all first-order
formulas (i.e., formulas without set quantification) of quantifier
depth at most $m$, then we write $\fA\equiv^\FO_m\fB$.

Let $(V,\preceq)$ be a partially ordered set. A set $M\subseteq V$ is
a \emph{chain} if $(M,\preceq)$ is linearly ordered, it is a
\emph{multichain} if $M$ is a finite union of chains. An element $x\in
M$ is a \emph{branching point} if $\{y\in M\mid x<y\}$ is nonempty and
does not have a least element.

We will also consider different restrictions of the satisfaction
relation $\models$ where set variables range over certain subsets,
only. In particular, we will meet the following restrictions. 
\begin{itemize}
\item Set quantification can be restricted to finite sets, i.e., we
  will discuss weak monadic second order logic. The resulting
  satisfaction relation is denoted $\models^\w$ and  the
  equivalence of structures $\equiv^\w_m$. 
\item Set quantification can be restricted to chains (where we assume
  a designated binary relation symbol $\preceq$ in $\sigma$) which
  results in $\models^\ch$ and $\equiv^\ch_m$, cf.\ Thomas~\cite{Tho87}. 
\item $\models^\mch$ etc.\ refer to the restriction of set
  quantification to multichains. 
\item The superscript $\cl$ denotes that set variables range over
  closed sets, only (where we associate a natural topology to any
  $\sigma$-structure), cf.\ Rabin~\cite{Rab69}. 
\end{itemize}

Let $t$ be some transformation of $\sigma$-structures into
$\tau$-structures, e.g., transitive closure.  A very strong relation
between the $\mathcal L$-theory of $\fA$ and the $\mathcal K$-theory
of $t(\fA)$ is the existence of \emph{a single} computable function
$\red$ that reduces the $\mathcal K$-theory of $t(\fA)$ to the
$\mathcal L$-theory of $\fA$ \emph{for any
  $\sigma$-structure~$\fA$}. As shorthand for this fact, we say ``The
transformation $t$ is $(\mathcal K,\mathcal L)$-compatible'' or,
slightly less precise ``The $\mathcal K$-theory of $t(\fA)$ is
\emph{uniformly reducible} to the $\mathcal L$-theory of $\fA$.''
$(\mathcal K,\mathcal K)$-compatible transformations are simply called
$\mathcal K$-compatible.

\begin{example}
  Any $\MSO$-transduction is $\MSO$-compatible~\cite{Cou94} and finite
  set interpretations are
  $(\MSO^\w,\FO)$-compatible~\cite{ColL07}. Feferman \& Vaught showed
  that any generalized product is
  FO-compatible~\cite{FefV59}. Finally, any generalized sum is
  $\MSO$-compatible by Shelah~\cite{She75}.
\end{example}

\subsection{Shelah and Stupp's and Muchnik's iteration}

Let $A$ be a (not necessarily finite) alphabet.  With $A^*$ we denote
the set of all finite words over~$A$, $A^\omega$ is the set of
infinite words, and $A^\infty=A^*\cup A^\omega$. The prefix relation
on finite and infinite words is $\preceq$. The set of finite prefixes
of a word $u\in A^\infty$ is denoted $\down u=\{v\in A^*\mid v\preceq
u\}$, if $C\subseteq A^\infty$, then $\down C=\bigcup_{u\in C}\down
u$. For $L\subseteq A^\infty$ and $u\in A^*$ let $u^{-1} L=\{v\in
A^\infty\mid uv\in L\}$ denote the left-quotient of $L$ with respect
to~$u$.

Let $\sigma$ be a relational signature and let
$\fA=(A,(R^\fA)_{R\in\sigma})$ be a structure over the signature
$\sigma$. The \emph{infinitary Shelah-Stupp-iteration $\fA^\infty$ of
  $\fA$} is the structure
\[
  \fA^\infty=(A^\infty,\preceq, (\hat{R})_{R\in\sigma},\varepsilon)
\]
where, for $R\in\sigma$, 
\[
   \hat{R} = \{(ua_1,\dots,ua_n)\mid u\in A^*,
                                         (a_1,\dots,a_n)\in R^\fA\}\ . 
\]
The \emph{(finitary) Shelah-Stupp-iteration $\fA^*$} is the restriction of
$\fA^\infty$ to the set of finite words~$A^*$. 

\begin{example}\label{E-tree}
  Suppose the structure $\fA$ has two elements $a$ and $b$ and two
  unary relations $R_1=\{a\}$ and $R_2=\{b\}$. Then $\hat{R}_1 =
  \{a,b\}^*a$ and $\hat{R}_2 =\{a,b\}^*b$. Hence the finitary
  Shelah-Stupp-iteration~$\fA^*$ can be visualized as a complete
  binary tree with unary predicates telling whether the current node
  is the first or the second son of its father. In addition, the root
  $\varepsilon$ is a constant of the Shelah-Stupp-iteration $\fA^*$.
  Furthermore, the infinitary Shelah-Stupp-iteration $\fA^\infty$ adds
  leaves to this tree at the end of any branch. Since this allows to
  define $(\mathbb R,\le)$ in $\fA^\infty$, the unrestricted
  MSO-theory of $\fA^\infty$ is undecidable.
\end{example}

\newcommand{\Interface}{\mathrm{iface}}

\begin{example}\label{L-monoids} \textbf{(cf.\ \cite{KusL06b})}
  The Shelah-Stupp iteration allows to reduce the Cayley graph of a
  free product to the Cayley graphs of the factors. 
  Let $M_i=(M_i,\circ_i,1_i)$ be monoids finitely generated by
  $\Gamma_i$ for $1\le i\le n$ and let
  $G_i=(M_i,(E_i^a)_{a\in\Gamma_i},\{1_i\})$ denote the rooted Cayley
  graph of $M_i$. Then the Cayley graph
  $G=(P,(E^a)_{a\in\bigcup\Gamma_i})$ of the free product
  $P=(P,\circ,1)$ of these monoids can be defined in the Shelah-Stupp
  iteration of the disjoint union of the Cayley graphs~$G_i$. For this
  to work, let $M=\bigcup_{1\le i\le n}M_i$ be the disjoint union of
  the monoids $M_i$ and consider the structure
  \[
     \cA=(M,(M_i)_{1\le i\le n},
            (E_i^a)_{\substack{1\le i\le n\\a\in\Gamma_i}},U
         )
  \]
  where $U= \{1_i\mid 1\le i\le n\}$ is the set of units.

  Then a word $w\in M^*$ belongs to the direct product $P$ iff the
  following holds in the Shelah-Stupp iteration of $\cA$:
  \[
    \bigwedge_{1\le i\le n}
     \forall x\lessdot y\preceq w:
     x\in\hat{M_i}\rightarrow y\notin\hat{M_i}
      \land y\notin\hat U
  \]
  where $\lessdot$ denotes the immediate successor relation of the
  partial order $\preceq$. For $a\in\Gamma_i$ and $v,w\in P$, we have
  $v\circ a=w$ (i.e., $(v,w)\in E^a$) iff the Shelah-Stupp iteration
  satisfies
\[    \left( \exists v'\in\hat U:v\lessdot v'\land(v',w)\in \hat{E_i^a}\right)
    \lor
      (v,w)\in \hat{E_i^a}
    \lor
     \left(   \exists w'\in\hat U:w\lessdot w'\land(v,w')\in \hat{E_i^a}\right)\ .
\]
\end{example}

Muchnik introduced the additional unary \emph{clone predicate}
$\clone=\{uaa\mid u\in A^*,a\in A\}$. The extension of the
Shelah-Stupp-iterations by this clone predicate will be called
\emph{finitary and infinitary Muchnik-iteration} $(\fA^*,\clone)$ and
$(\fA^\infty,\clone)$, resp. Courcelle and Walukiewicz \cite{CouW98}
showed that the unfolding of a
directed rooted graph~$G$ can be defined in the Muchnik iteration
$(G^*,\clone)$ of~$G$.

To simplify notation, we will occasionally omit the word ``finitary'' and just
speak of the Shelah-Stupp- and Muchnik-iteration. 

\section{A basis theorem for $\MSO^\mch$}

Rabin's tree theorem~\cite{Rab69} states the decidability of the monadic second
order theory of the complete binary tree. As a \emph{corollary} of his proof
technique by tree automata, one obtains Rabin's basis theorem~\cite[Theorem
26]{Rab72}: Let $\varphi$ be a formula with free variables $X_1,\dots,X_\ell$
and let $L_1,\dots,L_\ell\subseteq\{a,b\}^*$ be regular languages such that
the binary tree satisfies $\varphi(L_1,\dots,L_\ell)$. Then it satisfies
$\psi(L_1,\dots,L_\ell)$ where $\psi$ is obtained from $\varphi$ by
restricting all quantifications to regular sets. To obtain this basis theorem,
it suffices to show that validity of $\exists
X_\ell:\varphi(L_1,\dots,L_{\ell-1},X_\ell)$ implies the existence of a
regular set $R_\ell$ such that $\varphi(L_1,\dots,L_{\ell-1},R_\ell)$ holds
true in the binary tree. 

This is precisely what this section shows in our context of
$\MSO^\mch$ and the Shelah-Stupp-iteration~$\fA^*$. Even more, we will
not only show that the set $R_\ell$ can be chosen regular, but we will
also bound the size of the automaton accepting it.

\textit{Throughout this section, $\sigma$ denotes some purely relational
signature.}

\subsection{Preliminaries}

For $k,\ell\in\N$, let $\tau_{k,\ell}$ be the extension of the
signature $(\sigma,\preceq)$ by $k$ constants and $\ell$ unary
relations.  Using Hintikka-formulas (see \cite{EbbF91} for the
definition and properties of these formulas) one can show that for any
of the signatures $\tau_{k,\ell}$ and $m\in\N$, there are only
finitely many equivalence classes of~$\equiv^\mch_m$.  An upper bound
$T(\ell,m)$ for the number of equivalence classes of $\equiv^\mch_m$
on formulas over the signature $\tau_{2,\ell}$ can be computed
effectively.

Now let $\fA=(A,(R)_{R\in\sigma})$ be some $\sigma$-structure.
For $u\in A^*$, let $\fA^*_u$ denote the $\tau_{1,0}$-structure
$(uA^*,\sqsubseteq,(\bar R)_{R\in\sigma},u)$ where
\begin{itemize}
\item the relation $\sqsubseteq$ is the restriction of $\preceq$ to
  $uA^*$ and
\item $\bar R$ is the restriction of $\hat{R}$ to $uA^+$.
\end{itemize}
For any $u,v\in A^*$, the mapping $f:\fA^*_u\to \fA^*_v$ with
$f(ux)=vx$ is an isomorphism -- this is the reason to consider $\bar
R$ and not the restriction of $\hat{R}$ to $uA^*$. Similarly, the
$\tau_{2,0}$-structure $\fA^*_{u,v}=(uA^*\setminus
vA^+,\sqsubseteq,(\bar R)_{R\in\sigma},u,v)$ is defined for $u,v\in
A^*$ with $u\preceq v$. Here, again, $\bar R$ is the restriction of
$\hat{R}$ to $uA^+\setminus vA^+$.  

Frequently, we will consider the structure $\fA^*$ together with some
additional unary predicates $L_1,\dots,L_\ell$. As for the plain
structure $\fA^*$, we will also meet the restriction of
$(\fA^*,L_1,\dots,L_\ell)$ to the set $uA^*$, i.e., the structure
$(\fA^*_u,L_1\cap uA^*,\dots,L_\ell\cap uA^*)$. To simplify notation,
this will be denoted $(\fA^*_u,L_1,\dots,L_\ell)$; the structure
$(\fA^*_{u,v},L_1,\dots,L_\ell)$ is to be understood similarly. 
\medskip

\noindent\textbf{Example~\ref{E-tree} (continued).} 
In the case of Example~\ref{E-tree}, $\fA^*_u$ is just the subtree
rooted at the node~$u$. On the other hand, $\fA^*_{u,v}$ is obtained
from $\fA^*_u$ by deleting all descendants of $v$ and marking the node
$v$ as a constant. Thus, we can think of $\fA^*_{u,v}$ as a tree with
a marked leaf. These \emph{special trees} are fundamental in the work
of Gurevich \& Shelah \cite{GurS83} and of Thomas~\cite{Tho87}.  
\medskip

In the following, fix some $\ell\in\N$. We then define the operations
of product and infinite product of $\tau_{k,\ell}$-structures: If
$\fA=(A,\preceq^\fA,(R^\fA)_{R\in\sigma},a_1,a_2,
L^\fA_1,\dots,L^\fA_\ell)$ is a $\tau_{2,\ell}$-structure and
$\fB=(B,\preceq^\fB,(R^\fB)_{R\in\sigma},b_1,\dots,b_k,
L^\fB_1,\dots,L^\fB_\ell)$ a disjoint $\tau_{k,\ell}$-structure with
$k\ge1$, then their \emph{product} $\fA\cdot\fB$ is a
$\tau_{k,\ell}$-structure. It is obtained from the structure
\[
  (A\cup B,\preceq^\fA\cup\preceq^\fB,(R^\fA\cup R^\fB)_{R\in\sigma},
   L_1^\fA\cup L_2^\fB,\dots,L_\ell^\fA\cup L_\ell^\fB)
\]
by identifying $a_2$ and~$b_1$, taking the transitive closure of the
partial orders, and extending the resulting structure by the list of
constants $a_1,b_2,b_3,\dots,b_k$. Now let $\fA_n$ be disjoint
$\tau_{2,\ell}$-structures with constants $u_n$ and $v_n$ for
$n\in\N$. Then the \emph{infinite product} $\prod_{n\in\N}\fA_n$ is a
$\tau_{1,\ell}$-structure. It is obtained from the disjoint union of
the structures $\fA_n$ by identifying $v_n$ and $u_{n+1}$ for any
$n\in\N$. The only constant of this infinite product is~$u_0$.  If
$\fA\cong\fA_n$ for all $n\in\N$, then we write simply $\fA^\omega$
for the infinite product of the structures~$\fA_n$.

Standard applications of Ehrenfeucht-Fra\"\i{}ss\'e-games
(see~\cite{EbbF91}) yield:

\begin{proposition}\label{P-EF}
  Let $j,\ell,m\in\N$, $\fA_n,\fA_n'$ be $\tau_{2,\ell}$-structures for $n\in\N$
  and let $\fB,\fB'$ be some $\tau_{j+1,\ell}$-structures such that
  $\fA_n\equiv^\mch_m\fA'_n$ for $n\in\N$ and $\fB\equiv^\mch_m\fB'$. Then
  \[
    \fA_0\cdot\fB\equiv^\mch_m\fA_0'\cdot\fB' \quad\text{and}\quad
    \prod_{n\in\N}\fA_n\equiv^\mch_m\prod_{n\in\N}\fA_n'\ . 
  \]
\end{proposition}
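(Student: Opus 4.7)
I would argue via the Ehrenfeucht-Fra\"\i{}ss\'e characterization of $\equiv^\mch_m$: duplicator wins the $m$-round $\MSO^\mch$-game on $(\fA,\fB)$ iff $\fA\equiv^\mch_m\fB$, where a round consists of spoiler picking either an element or a multichain in one structure and duplicator responding in kind in the other. The plan is to build duplicator's winning strategy on the composed structure by running the factor subgames in parallel, routing spoiler's moves componentwise and returning the union of the componentwise responses.

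For the finite product, I would fix winning strategies $\sigma$ on $(\fA_0,\fA'_0)$ and $\tau$ on $(\fB,\fB')$. The universe $A\cup B$ of $\fA_0\cdot\fB$, identified at $a_2=b_1$, assigns every element to $\fA_0$, to $\fB$, or to both, so element moves route unambiguously. A multichain $C$ in the product splits as $(C\cap A)\cup(C\cap B)$, with each piece a multichain in its factor because restrictions of chains remain chains; I would route each piece to its subgame and return the union of the two responses, which is again a multichain in $\fA'_0\cdot\fB'$. Since $\preceq^{\fA_0\cdot\fB}$ is the transitive closure of the factor orders, each $\hat R$ is the disjoint union of the factor relations, and multichain-membership is determined componentwise, the product game's winning condition reduces to the conjunction of the two subgame winning conditions.

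The infinite product uses the same scheme with countably many subgames $G_n$ on $(\fA_n,\fA'_n)$: element moves route to single subgames, and a multichain $C$ decomposes as $(C\cap\fA_n)_{n\in\N}$, each piece a multichain in $\fA_n$. The main obstacle is that a black-box $\equiv^\mch_m$-winning strategy on $(\fA_n,\fA'_n)$ only promises \emph{some} multichain $C'_n$ in response, with no control on its chain-count, so the naive union $\bigcup_n C'_n$ need not be a multichain in $\prod\fA'_n$. I would handle this by refining the game to a $k$-labelled $\mch$-game, in which a multichain move carries a partition into at most $k$ labelled chains and duplicator must mirror the labelling.  A standard composition-method argument (using Hintikka-formula counting) shows that the hypothesis $\fA_n\equiv^\mch_m\fA'_n$ still yields duplicator a winning strategy in the labelled game for $k:=m$, since an $m$-round game plays at most $m$ multichains and each can be witnessed by $\le m$ labelled chains, padding with empty chains if necessary. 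In the labelled game the union of the per-factor responses is a multichain in $\prod\fA'_n$, because each label's chain property is preserved under union along the factors, the product order being the transitive closure of the factor orders.

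Induction on $m$, carried out simultaneously for both statements and in a form strong enough to absorb constants named in previous rounds, then closes the argument. The finite product reduces to routine bookkeeping once the componentwise decompositions are set up; the real work lies in the labelled-game refinement for the infinite product, which is the only place where the multichain restriction creates tension with the naive parallel-play strategy.
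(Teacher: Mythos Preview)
The paper offers no detailed proof here; it simply asserts that the result follows from ``standard applications of Ehrenfeucht-Fra\"\i{}ss\'e-games'' with a reference to~\cite{EbbF91}. Your sketch is already considerably more explicit than what the paper provides, and the overall strategy---route moves componentwise and assemble the responses---is the intended one. The finite-product case is fine.

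For the infinite product you correctly isolate the real difficulty: duplicator's per-factor responses are only guaranteed to be multichains with no a-priori bound on their chain-count, so their union over~$n$ might fail to be a multichain. Your labelled-game fix is a reasonable idea, but the justification has a gap. The claim that ``each [multichain] can be witnessed by $\le m$ labelled chains'' is not correct: spoiler's multichain in $\prod_n\fA_n$ may decompose into arbitrarily many chains---that number is entirely unrelated to the number~$m$ of rounds. For your approach to go through, duplicator would need to win the $k$-labelled game on $(\fA_n,\fA'_n)$ for \emph{every}~$k$ (since $k$ is determined by spoiler's move in the product), and this does not follow from $\fA_n\equiv^\mch_m\fA'_n$ alone: a $k$-tuple of chains carries strictly more information than their union, so the labelled game is genuinely harder for duplicator than the plain $\MSO^\mch$-game. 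There is also a second, smaller issue: even granting winning strategies in the labelled subgames, the union $\bigcup_n D^i_n$ is a chain in the product only if the pieces line up through the gluing constants (every element of $D^i_n$ must lie below $v'_n$ whenever label~$i$ continues into factor $n+1$); you would still need to argue that duplicator can choose responses preserving this relation to the constants \emph{at the moment of the move}, not merely in the final partial isomorphism.
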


\begin{remark}\label{R-typical-use}
  We sketch a typical use of the above proposition in this
  section. Let $x\in A^*$ be some sufficiently long
  word. Since $\equiv^\mch_m$ has only finitely many equivalence
  classes, there exist words $u,v,w$ with $x=uvw$ and
  $v\neq\varepsilon$ such that
  $(\fA^*_u,\{x\})\equiv^\mch_m(\fA^*_{uv},\{x\})$. Hence we obtain
  \[
    (\fA^*,\{x\}) =(\fA^*_{\varepsilon,u},\emptyset)\cdot
                      (\fA^*_{u},\{uvw\})
    \equiv_m^\mch(\fA^*_{\varepsilon,u},\emptyset)\cdot
                      (\fA^*_{uv},\{uvw\})
    \cong(\fA^*,\{uw\})\ .
  \]
  (This proves that every consistent property of a single element of
  $\fA^*$ is witnessed by some ``short'' word.) 

  The last isomorphism does not hold for the Muchnik-iteration since
  the clone predicate allows to express that the last letter of $u$
  and the first letter of $v$ are connected by some edge in the
  graph~$\fA$.
\end{remark}

\begin{convention}
  We consider complete deterministic finite automata
  $\cM=(Q,B,\iota,\delta,F)$, called \emph{automata} for short. Its
  language is denoted $L(\cM)$. We will also write $p.w$ for
  $\delta(p,w)$. The \emph{transition matrix} of $\cM$ is the tuple
  $T=(T_{p,q})_{p,q\in Q}$ with $T_{p,q}=\{b\in B\mid
  \delta(p,b)=q\}$. 
\end{convention}

As explained above, we will use automata to describe subsets of the
Shelah-Stupp iteration~$\fA^*$, i.e., the alphabet $B$ will always be
a finite subset of the universe of~$\fA$. These regular subsets have
the following nice property whose proof is obvious.

\begin{lemma}
  Let $\fA$ be a $\sigma$-structure with universe $A$ and let
  $\cM=(Q,B,\iota,\delta,F)$ be an automaton with alphabet $B\subseteq
  A$. Then, for any $u,v\in B^*$ with
  $\delta(\iota,u)=\delta(\iota,v)$, the mapping $f_{u,v}:uA^*\to
  vA^*:ux\mapsto vx$ is an isomorphism from $(\fA^*_u,L(\cM))$ onto
  $(\fA^*_v,L(\cM))$. 
\end{lemma}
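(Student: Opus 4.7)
The plan is to verify all four parts of being an isomorphism — bijectivity, preservation of the partial order $\sqsubseteq$, preservation of each relation $\bar R$, preservation of the constant, and preservation of the unary predicate $L(\cM)$ — each by a direct unwinding of definitions. Since the lemma's proof is labelled ``obvious'', I expect no real obstacle; the only point that requires a moment of thought is the interaction between the alphabet $B$ of the automaton and the (possibly larger) universe $A$ of $\fA$.

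First I would observe that $f_{u,v}$ is a bijection from $uA^*$ onto $vA^*$ with two-sided inverse $f_{v,u}$, and that $f_{u,v}(u)=v$, so the distinguished constants are matched. Next, since $\sqsubseteq$ on $\fA^*_u$ is just the restriction of prefix order to $uA^*$, I have $ux\sqsubseteq uy \iff x\preceq y \iff vx\sqsubseteq vy$, so $f_{u,v}$ is order-preserving. For each $R\in\sigma$, the relation $\bar R$ on $\fA^*_u$ consists, by definition, of tuples of the form $(uza_1,\ldots,uza_n)$ with $z\in A^*$ and $(a_1,\ldots,a_n)\in R^\fA$; applying $f_{u,v}$ yields $(vza_1,\ldots,vza_n)$, which is exactly a tuple of $\bar R$ on $\fA^*_v$. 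The converse via $f_{v,u}$ is identical. This is precisely why the definition of $\fA^*_u$ uses $\bar R$ (the restriction of $\hat R$ to $uA^+$) rather than the full restriction of $\hat R$ to $uA^*$, so no tuple crossing the root $u$ causes trouble.

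Finally, I would handle the unary predicate. For $x\in A^*$, either $x\in B^*$ or not. If $x\notin B^*$, then neither $ux$ nor $vx$ lies in $B^*$, so both fall outside $L(\cM)\subseteq B^*$ and the equivalence $ux\in L(\cM)\iff vx\in L(\cM)$ holds vacuously. If $x\in B^*$, then since $u,v\in B^*$ both $ux$ and $vx$ lie in $B^*$, and using $\delta(\iota,u)=\delta(\iota,v)$ one computes
\[
  \delta(\iota,ux)=\delta(\delta(\iota,u),x)=\delta(\delta(\iota,v),x)=\delta(\iota,vx),
\]
so $ux\in L(\cM)\iff vx\in L(\cM)$. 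Combining all five preservation properties, $f_{u,v}$ is the claimed isomorphism from $(\fA^*_u,L(\cM))$ onto $(\fA^*_v,L(\cM))$.
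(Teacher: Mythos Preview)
Your proof is correct and follows the same direct-verification approach the paper has in mind; the paper itself simply declares the lemma ``obvious'' and gives no further argument, so your careful unwinding of bijectivity, order, $\bar R$, the constant, and $L(\cM)$ is exactly the intended (if unstated) proof. Your case split on whether $x\in B^*$ correctly handles the one subtlety, namely that $\delta$ is only defined on words over~$B$.
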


As a consequence, the number of  isomorphism classes of
structures $(\fA^*_v,L(\cM))$ is finite. This fails in the Muchnik-iteration
even for $L(\cM)=\emptyset$: With $\fA=(\mathbb{N},\mathrm{succ})$ and
$m,n\in\mathbb N$, we have $(\fA_m^*,\clone)\cong(\fA_n^*,\clone)$ iff
$m=n$ since the structure $(\mathbb N,\mathrm{succ},m)$ can be defined
in $(\fA_m^*,\clone)$.

\subsection{Quantification}
\label{SS-quantification}
While multichains in the Shelah-Stupp-iteration can be rather
complicated, this section shows that, up to logical equivalence, we
can restrict attention to ``simple'' multichains. Here, ``simple''
means that they are regular and, even more, can be accepted by a
``small'' automaton.

\textit{For the rest of this section, let $\fA=(A,(R)_{R\in\sigma})$
  be some fixed $\sigma$-structure and $\ell,m\in\N$.  For $1\le
  i\le\ell$, let $\cM_i=(Q_i,B_i,\iota_i,F_i)$ be automata with
  $B_i\subseteq A$ such that $L(\cM_i)\subseteq A^*$ is a multichain
  in the Shelah-Stupp iteration $\fA^*$. Write $\bar L$ for the tuple
  of multichains $(L(\cM_1),\dots,L(\cM_\ell))$.}

\begin{proposition}\label{P-regular-chain-exists}
  Let $C\subseteq A^*$ be a chain. Then there exist $u,v\in A^*$,
  $E\subseteq\Down u$, and $F\subseteq \Down v$ such that
  $\iota_i.u=\iota_i.uv$ for all $1\le i\le\ell$ and
    $(\fA^*,\bar L,C)\equiv^\mch_m(\fA^*,\bar L,D)$
  with $D=E\cup uv^*F$.
\end{proposition}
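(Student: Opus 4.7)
The plan is to split on whether the chain $C$ is finite. When $C$ is finite (or empty), pick any letter $a \in A$ and set $u = \max(C) \cdot a$ (or $u = \varepsilon$ if $C = \emptyset$), $v = \varepsilon$, $F = \emptyset$ and $E = C$. Then $E \subseteq \Down u$, $D = E = C$, and $\iota_i.u = \iota_i.uv$ is trivial, so one actually obtains equality rather than mere $\equiv^\mch_m$-equivalence.

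When $C$ is infinite, its elements have unbounded length, hence $C$ is cofinal in the prefix-chain of some $\omega$-word $\alpha = a_1 a_2 \cdots \in A^\omega$. Writing $w_n = a_1 \cdots a_n$, the structure $(\fA^*, \bar L, C)$ decomposes as an infinite product of segment structures $(\fA^*_{w_k, w_{k+1}}, \bar L, C)$ in the sense of the paragraph preceding Proposition~\ref{P-EF}. I colour $[\N]^2$ by assigning to $(n, n')$ the pair consisting of the $\equiv^\mch_m$-type of the concatenated segment $(\fA^*_{w_n, w_{n'}}, \bar L, C)$ and the tuple $(\iota_i.w_n, \iota_i.w_{n'})_{1 \le i \le \ell}$. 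Because $\equiv^\mch_m$ has only finitely many classes on $\tau_{2, \ell+1}$-structures and each $Q_i$ is finite, only finitely many colours occur, so Ramsey's theorem yields an infinite $N \subseteq \N$ whose pairs all receive the same colour. Comparing the role of a single $n \in N$ as right endpoint of one pair and left endpoint of another forces the left- and right-endpoint state-tuples to coincide; in particular, $w_n \in C$ takes the same truth value for every $n \in N$.

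Let $n_0 < n_1$ be the two smallest elements of $N$ and set $u = w_{n_0}$, $v = a_{n_0+1}\cdots a_{n_1}$, $E = C \cap \Down u$, and $F = \{f \in \Down v : uf \in C\}$. Then $\iota_i.u = \iota_i.uv$ for each $i$, hence $\iota_i.u = \iota_i.uv^k$ for all $k \in \N$. The preceding isomorphism lemma, extended coordinatewise to the tuple $\bar L$, therefore makes $f_{u, uv^k} : ux \mapsto uv^k x$ an isomorphism between the segments $(\fA^*_{u, uv}, \bar L, C \cap (uA^* \setminus uvA^+))$ and $(\fA^*_{uv^k, uv^{k+1}}, \bar L, D \cap (uv^k A^* \setminus uv^{k+1} A^+))$; the Ramsey-forced equivalence ``$u \in C$ iff $uv \in C$'' is precisely what is needed so that the boundary markings are correctly translated.

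To conclude, both $(\fA^*, \bar L, C)$, grouped along $N$, and $(\fA^*, \bar L, D)$, grouped along $uv^\omega$, share the common prefix factor $(\fA^*_{\varepsilon, u}, \bar L, C \cap \down u)$ (the equality $D \cap \down u = C \cap \down u$ follows from the definitions of $E$ and $F$), while their infinite tails consist of segments all of the common $\equiv^\mch_m$-type $\sigma$ given by the Ramsey colour. Applying Proposition~\ref{P-EF} first in its infinite-product version for the tails and then in its binary-product version to attach the prefix yields $(\fA^*, \bar L, C) \equiv^\mch_m (\fA^*, \bar L, D)$. The main obstacle is setting up the Ramsey colouring so that every relevant datum --- segment type, endpoint automaton states, and boundary $C$-membership --- is absorbed into a finite palette, and then verifying that the lemma-induced isomorphism on each segment really sends $C$ to $D$ once one accounts for the fact that $uv$ can only enter $D$ through the $k=1$, $f=\varepsilon$ summand of $uv^*F$.
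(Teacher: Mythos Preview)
Your argument is correct and follows essentially the same approach as the paper's proof: both use Ramsey's theorem to find $u_1\prec u_2$ (your $u$ and $uv$) along the chain such that the automaton states and the $\equiv^\mch_m$-type of the segment stabilise, and then invoke Proposition~\ref{P-EF} to replace the tail by an $\omega$-power of a single segment. Your explicit treatment of the finite-$C$ case, your verification that the boundary membership ``$u\in C$ iff $uv\in C$'' is forced by the segment type, and your check that $f_{u,uv^k}$ carries the $C$-slice to the $D$-slice are welcome details that the paper leaves to the reader; they do not change the overall strategy.
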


\begin{proof}
  One shows the existence of $u_1\prec u_2\in A^*$ such that
  $C\cup\{u_1,u_2\}$ is a chain, $\iota_i.u_1=\iota_i.u_2$ for all
  $1\le i\le\ell$,
    $(\fA^*,\bar L) \cong  (\fA^*_{\varepsilon,u_1},\bar L)
          \cdot (\fA^*_{u_1,u_2},\bar L)^\omega$, and
    $(\fA^*,\bar L,C) \equiv^\mch_m
         (\fA^*_{\varepsilon,u_1},\bar L,C)
          \cdot (\fA^*_{u_1,u_2},\bar L,C)^\omega$.
  This uses arguments similar to those in Remark~\ref{R-typical-use}
  and Ramsey's theorem. The result follows with $u=u_1$, $uv=u_2$,
  $E=C\cap\Down u$, and $F=u^{-1}(C\cap\Down{u_2})$.
\end{proof}

The above proposition shows that every consistent property of a chain
is witnessed by some regular chain~$D$. Using the pigeonhole principle
and arguments as in Remark~\ref{R-typical-use}, one can bound the
lengths of $u$ and $v$ to obtain

\begin{proposition}\label{P-chain-quantification}
  Let $C\subseteq A^*$ be a chain. Then there exists an automaton
  $\cN$ with at most $2\prod_{1\le i\le\ell}|Q_i|\cdot T(\ell+1,m)$
  states such that $L(\cN)$ is a chain and $(\fA^*,\bar
  L,C)\equiv^\mch_m (\fA^*,\bar L,L(\cN))$.
\end{proposition}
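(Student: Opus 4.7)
The plan is to take the chain $D=E\cup uv^*F$ produced by Proposition~\ref{P-regular-chain-exists} and shrink the words $u$ and $v$ by pigeonhole, so that a small automaton accepts the resulting regular chain. First, I would observe that a chain of the form $E\cup uv^*F$ with $E\subseteq\Down u$ and $F\subseteq\Down v$ is recognised by an automaton with at most $|u|+|v|+1$ states: one state per proper prefix of $u$, a single state marking ``just finished reading $u$'' that also serves as the start of the $v$-cycle, $|v|-1$ states looping through $v$, plus one trap state; accepting states are selected according to $E$ and $F$. Thus it suffices to bound $|u|$ and $|v|$ by $\prod_{i=1}^\ell|Q_i|\cdot T(\ell+1,m)-1$, for then the automaton has at most $2\prod_i|Q_i|\cdot T(\ell+1,m)$ states.

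To bound $|u|$, I would assign to every prefix $x\preceq u$ along the chain $\down u$ the pair consisting of the state tuple $(\iota_i.x)_{1\le i\le\ell}\in\prod_iQ_i$ and the $\equiv^\mch_m$-type of the $\tau_{2,\ell+1}$-structure $(\fA^*_{\varepsilon,x},\bar L,E\cap\down x)$ with constants $\varepsilon,x$ (the extra unary predicate records the chain so far). The number of such pairs is at most $\prod_i|Q_i|\cdot T(\ell+1,m)$. If $|u|$ exceeds this number, pigeonhole yields $x_1\prec x_2\preceq u$ sharing a common pair. Exactly as in Remark~\ref{R-typical-use}, I would cut out the block $\fA^*_{x_1,x_2}$ by invoking Proposition~\ref{P-EF}, obtaining a structure with a shorter $u'$ and adjusted $E'$ that is still $\equiv^\mch_m$-equivalent to $(\fA^*,\bar L,D)$. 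Because the state tuples at $x_1$ and $x_2$ coincide, the loop condition $\iota_i.u=\iota_i.uv$ carries over to $u'$. Iterating forces $|u|\le\prod_i|Q_i|\cdot T(\ell+1,m)-1$.

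The analogous pigeonhole applied to prefixes $x$ with $u\preceq x\preceq uv$, now pairing the state tuple with the $\equiv^\mch_m$-type of $(\fA^*_{u,x},\bar L,uF\cap[u,x])$, bounds $|v|$ by the same quantity. Here Proposition~\ref{P-EF} is used in its infinite-product form, since the block $\fA^*_{u,uv}$ appears $\omega$ times in the decomposition from Proposition~\ref{P-regular-chain-exists}: matching state tuples ensure that the ``glueing states'' still coincide after the cut, and matching types ensure that the $\equiv^\mch_m$-class of each factor, and therefore of the infinite product, is preserved.

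I expect the main obstacle to be the bookkeeping when both cuts are in force: one must check that cutting $u$ does not spoil the hypotheses required for cutting $v$, and that the final decomposition $(\fA^*_{\varepsilon,u'},\bar L,E')\cdot(\fA^*_{u',u'v'},\bar L,F')^\omega$ re-assembles, via Proposition~\ref{P-EF}, into a structure $\equiv^\mch_m$-equivalent to $(\fA^*,\bar L,C)$ with $\iota_i.u'=\iota_i.u'v'$ intact. Once these reductions are justified, the automaton $\cN$ for the shortened $E'\cup u'{v'}^*F'$ is built by the direct construction above, yielding the claimed state bound.
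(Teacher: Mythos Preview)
Your proposal is correct and follows essentially the same approach the paper sketches: start from the regular chain $D=E\cup uv^*F$ of Proposition~\ref{P-regular-chain-exists}, then use pigeonhole on (state-tuple, $\equiv^\mch_m$-type) pairs along the prefixes of $u$ and of $uv$ together with the factorisation arguments of Remark~\ref{R-typical-use} and Proposition~\ref{P-EF} to shorten $u$ and $v$ to at most $\prod_i|Q_i|\cdot T(\ell+1,m)-1$ each, whence the $|u|+|v|+1$-state automaton meets the bound. The only cosmetic difference is that you record the type of the \emph{upper} segment $(\fA^*_{\varepsilon,x},\bar L,\ldots)$ while the Remark works with the \emph{lower} cone $(\fA^*_x,\ldots)$; either choice is fine since matching state tuples make the complementary factor isomorphic, and Proposition~\ref{P-EF} then applies.
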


It is our aim to prove a similar result for arbitrary multichains in
place of the chain $C$ in the proposition above. Certainly, in order
to get a small automaton for a multichain, the branching points of
this set have to be short words. Again using arguments as in
Remark~\ref{R-typical-use}, one obtains

\begin{lemma}\label{L-small-set-1}
  Let $M\subseteq A^*$ be a multichain. Then there exists a multichain
  $N\subseteq A^*$ such that
  \begin{itemize}
  \item $(\fA^*,\bar L,M)\equiv^\mch_m
    (\fA^*,\bar L,N)$ and
  \item any branching point of $N$ has length at most $k=\prod_{1\le
      i\le \ell}(|Q_i|+1)\cdot T(\ell+1,m)$. 
  \end{itemize}
\end{lemma}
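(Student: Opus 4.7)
The plan is to iteratively invoke the shortening technique of Remark~\ref{R-typical-use} to contract each over-long branching point of~$M$. Suppose $M$ has a branching point $x$ of maximal length with $|x|>k$, and for $0\le j\le|x|$ let $y_j$ denote the prefix of~$x$ of length~$j$. My first step is to assign to each prefix a combined type
$$\tau(y_j)=\bigl((\iota_i.y_j)_{1\le i\le\ell},\;[(\fA^*_{y_j},\bar L,M)]_{\equiv^\mch_m}\bigr).$$
Tacitly extending each $\cM_i$ with a sink state to absorb letters in $A\setminus B_i$, there are $\prod_i(|Q_i|+1)$ possible automaton tuples; as $(\fA^*_{y_j},\bar L,M)$ is a $\tau_{1,\ell+1}$-structure (hence $\tau_{2,\ell+1}$-definable), its $\equiv^\mch_m$-class takes at most $T(\ell+1,m)$ values. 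So there are at most $k$ types, and pigeonhole applied to $\tau(y_0),\dots,\tau(y_{|x|-1})$ supplies indices $0\le j_1<j_2<|x|$ with $\tau(y_{j_1})=\tau(y_{j_2})$.

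The second step is the replacement. Using the decomposition $(\fA^*,\bar L,M)\cong(\fA^*_{\varepsilon,y_{j_1}},\bar L,M)\cdot(\fA^*_{y_{j_1}},\bar L,M)$ together with Proposition~\ref{P-EF}, I replace the right factor by the $\equiv^\mch_m$-equivalent $(\fA^*_{y_{j_2}},\bar L,M)$. The equality $\iota_i.y_{j_1}=\iota_i.y_{j_2}$ ensures the canonical tree isomorphism $y_{j_2}x'\mapsto y_{j_1}x'$ carries $L(\cM_i)\cap y_{j_2}A^*$ onto $L(\cM_i)\cap y_{j_1}A^*$, so after transporting back to~$\fA^*$ the product is isomorphic to $(\fA^*,\bar L,N)$ for
$$N=\bigl(M\cap(A^*\setminus y_{j_1}A^+)\bigr)\cup\{y_{j_1}x':y_{j_2}x'\in M\}.$$
Writing $M=C_1\cup\dots\cup C_n$ and applying the construction to each chain, a direct verification (comparability holds either within each piece or because any element of $C_i\cap(A^*\setminus y_{j_1}A^+)$ comparable with something in $y_{j_1}A^+$ must be a prefix of $y_{j_1}$) shows that each resulting $N_i$ is a chain, so~$N$ is a multichain.

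The third step is the iteration. The targeted~$x$ is shortened to length $|x|-(j_2-j_1)<|x|$. Any branching point of~$N$ strictly below~$y_{j_1}$ is the image of a branching point of~$M$ below~$y_{j_2}$, hence strictly shorter than it. Any branching point of~$N$ incomparable to~$y_{j_1}$ coincides with a branching point of~$M$ of the same length. The only new branching points that may be created lie at~$y_{j_1}$ or at some strict ancestor of~$y_{j_1}$, hence have length at most~$j_1<|x|$. Consequently the lexicographic measure (maximal branching-point length, number of branching points attaining this maximum) strictly decreases with each shortening targeted at a maximal-length branching point, so after finitely many iterations every branching point has length at most~$k$. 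The desired $\equiv^\mch_m$-equivalence follows by transitivity.

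I expect the main obstacle to be the termination analysis above. Because the status of~$y_{j_1}$ in~$N$ is the disjunction of the statuses of~$y_{j_1}$ and~$y_{j_2}$ in~$M$, the branching behaviour at strict ancestors of~$y_{j_1}$ can change under the shortening, potentially adding or removing branching points there. Verifying that every such new branching point has length strictly less than the one just eliminated — so that the lexicographic measure really decreases — is the place where the geometry of~$\fA^*$ must be handled by a careful case distinction rather than by a direct appeal to Proposition~\ref{P-EF}.
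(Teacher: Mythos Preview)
Your argument is correct and is precisely the kind of pigeonhole-and-shorten argument the paper has in mind when it writes ``using arguments as in Remark~\ref{R-typical-use}.'' The paper gives no further details for this lemma, so you are supplying the omitted proof rather than offering an alternative route; in particular, your type $\tau(y_j)$, the product decomposition via Proposition~\ref{P-EF}, and the verification that $N$ is again a multichain all match what the paper's hint demands.

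The one place where your write-up goes beyond the paper is the termination analysis, and here you are right to flag it as the delicate point. Two small comments. First, the well-foundedness of your lexicographic measure tacitly uses that a multichain has only finitely many branching points; this is true (any branching point lies below one of the at most $\binom{n}{2}$ meets $\sup(\down C_i)\wedge\sup(\down C_j)$), but it would be worth saying so explicitly. Second, your clause ``new branching points \dots at some strict ancestor of~$y_{j_1}$'' is indeed needed: one can cook up examples where a strict ancestor $p\prec y_{j_1}$ is \emph{not} a branching point of~$M$ (because the least of $M\cap pA^+$ sits strictly between $y_{j_1}$ and~$y_{j_2}$) but \emph{is} one of~$N$. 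So the case distinction you anticipate in your final paragraph is genuinely necessary, and your bound $|p|\le j_1<|x|$ is exactly what keeps the measure decreasing.
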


\begin{lemma}\label{L-small-set-2}
  Let $M$ be a multichain such that all branching points of $M$ have
  length at most $s-1$. Then there exists an automaton $\cN$ with at
  most $(2\prod_{1\le i\le\ell}|Q_i|\cdot T(\ell+1,m))^{s+1}$ many
  states such that $L(\cN)$ is a multichain and $(\fA^*,\bar L,M)
  \equiv^\mch_m (\fA^*,\bar L,L(\cN))$.
\end{lemma}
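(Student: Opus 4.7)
The plan is to proceed by induction on $s$, with $K = 2\prod_{1 \leq i \leq \ell} |Q_i| \cdot T(\ell+1, m)$ denoting the per-chain bound from Proposition~\ref{P-chain-quantification}. The target is to produce an automaton of at most $K^{s+1}$ states.

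For the base case $s = 0$, the multichain $M$ has no branching points, hence decomposes as a disjoint union $M = \bigsqcup_{v \in V} C_v$ indexed by the antichain $V$ of its minimal elements, each $C_v = M \cap vA^*$ being a chain in $\fA^*_v$ with minimum $v$ (the no-branching assumption together with the fact that chains in $A^*$ always admit a minimum forces this structure). I would first perform an Ehrenfeucht-Fra\"\i ss\'e reduction along the lines of Remark~\ref{R-typical-use}: two minimal elements $v, v'$ realizing the same $\equiv^\mch_m$-type of the triple $(\fA^*_v, \bar L, C_v)$ can be trimmed so that only boundedly many types survive, the bound being roughly $K/2$. Then Proposition~\ref{P-chain-quantification} supplies a chain automaton of at most $K$ states for each remaining chain, and careful state-sharing across chains yields a single automaton of at most $K$ states.

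For the inductive step $s \geq 1$, I would decompose $M$ along the level-$1$ split $M = (M \cap \{\varepsilon\}) \cup \bigcup_{a \in A} M_a$ with $M_a = M \cap aA^*$. Viewed in the isomorphic copy $\fA^*_a$ (with the background automata $\cM_i$ restarted at $\iota_i.a$, so $K$ is unchanged), each $M_a$ has branching points of length at most $s-2$. The induction hypothesis gives an automaton $\cN_a$ on at most $K^s$ states whose language is a multichain in $aA^*$ that is $\equiv^\mch_m$-equivalent to $M_a$. By Proposition~\ref{P-EF}, two letters $a, a'$ whose triples $(\fA^*_a, \bar L, M_a)$ share the same $\equiv^\mch_m$-type can use a common sub-automaton. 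Since there are at most $K$ such types, I glue on a single fresh initial state with transitions into the at most $K$ distinct $\cN_a$'s; identifying this initial state with one sub-automaton's initial state absorbs the additive $+1$, yielding $K \cdot K^s = K^{s+1}$ states.

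The step I expect to be the main obstacle is the base case: a multichain with no branching points can still contain many pairwise incomparable chains of differing shapes, so the bound $K$ is genuinely tight and demands simultaneously controlling both the number of chains (via the tail-type reduction) and the per-chain automaton size (via Proposition~\ref{P-chain-quantification}), with state-sharing to avoid an unwanted $|V|$-factor blow-up. A secondary technical point in the inductive step is verifying that the simultaneous replacement of all $M_a$ by their approximants preserves the global $\equiv^\mch_m$-type of $(\fA^*, \bar L, M)$; this should follow by iterating Proposition~\ref{P-EF} over the product decomposition of $\fA^*$ induced by the level-$1$ children, exactly as in the derivation of Proposition~\ref{P-chain-quantification} from Proposition~\ref{P-regular-chain-exists}.
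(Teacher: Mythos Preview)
Your inductive step is essentially the paper's: decompose $M$ along the first letter, apply the induction hypothesis to each $a^{-1}M$, collapse the letters $a$ into at most $n\cdot T(\ell+1,m)$ classes via the equivalence $\theta$ that records both the tuple of states $(\iota_i.a)_i$ and the $\equiv^\mch_m$-type of $(\fA^*_a,\bar L,M)$, keep one sub-automaton per class, and hang everything below a fresh initial state. The ``secondary technical point'' you flag about simultaneous replacement is exactly what the paper dispatches with one appeal to Proposition~\ref{P-EF}.

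The divergence is the base case. The paper disposes of $s=0$ in a single line: it asserts that then $M$ is a chain and invokes Proposition~\ref{P-chain-quantification}. You instead argue that a multichain without branching points is a disjoint union of chains indexed by its minimal elements (which is correct: if $v$ is minimal in $M$ and $M\cap vA^*$ were not a chain, the maximal common lower bound in $M$ of two incomparable elements above $v$ would be a branching point). But your plan to compress this union into a single $K$-state automaton by ``careful state-sharing'' is where the proposal has a genuine gap: after your tail-type reduction you still have up to $nT(\ell+1,m)$ chains, each requiring up to $K$ states by Proposition~\ref{P-chain-quantification}, and nothing in the outline explains how these merge into $K$ rather than roughly $K^2$ states. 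You are right that the paper's one-liner is loose under the literal definition of branching point (witness $M=\{a,b\}$, which has no branching point yet is not a chain); the intended reading is almost certainly branching of $\down M$ in the tree $A^*$, under which $s=0$ genuinely forces $M$ to be a chain and the base case is immediate. If you prefer to keep the literal definition, the clean fix is to induct not on the maximal length of a branching point but on the maximal length of a meet $y\wedge z$ of incomparable $y,z\in M$; once that drops below zero, $M$ is a chain and Proposition~\ref{P-chain-quantification} applies directly.
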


\begin{proof}
  Let $n=\prod_{1\le i\le\ell}|Q_i|$ and $\bar
  L=(L(\cM_1),\dots,L(\cM_\ell))$.
  
  The lemma is shown by induction on~$s$. If $s=0$, then $M$ is a
  chain, i.e., the result follows from Prop.~\ref{P-chain-quantification}.

  Now let $M$ be a multichain such that any branching point has length
  at most~$s>0$. By the induction hypothesis, for every $a\in A$,
  there exists an automaton $\cN_a$ with at most $(2n
  T(\ell+1,m))^{s+1}$ many states such that $L(\cN_a)$ is a multichain
  and
  \[
    (\fA^*_a,\bar L,M)
    \equiv^\mch_m (\fA^*,a^{-1}L(\cM_1),\dots,a^{-1}L(\cM^a_\ell),L(\cN_a))\ . 
  \]
  
  Let $\theta$ be the equivalence relation on $A$ with
  $(a,b)\in\theta$ if and only if
  \begin{enumerate}
  \item $\delta_i(\iota_i,a)=\delta_i(\iota_i,b)$ for all $1\le
    i\le\ell$ and
  \item $(\fA^*_a,\bar L,M)\equiv^\mch_m(\fA^*_b,\bar L,M)$. 
  \end{enumerate}
  Let $H\subseteq A$ contain precisely one element~$h$ from any
  $\theta$-equivalence class.  Then the set $\bigcup\{a L(\cN_h)\mid
  a\,\theta\, h\in H\text{ and
  }a^{-1}M\neq\emptyset\}\cup(\{\varepsilon\}\cap M)$ is a multichain
  and can be accepted by some automaton~$\cN$ with the right number of
  states.
  
  Then $(\fA^*,\bar L,L(\cN))$ is obtained from $(\fA^*,\bar L,M)$ by
  replacing any subtree $(\fA^*_a,\bar L,M)$ with the equivalent
  structure $(\fA^*,a^{-1}L(\cM_1),\dots,a^{-1}L(\cM^a_\ell),L(\cN_h))$ for
  $a\,\theta h\in H$. Hence, by Prop.~\ref{P-EF},
  $(\fA^*,\bar L,M)\equiv^\mch_m(\fA^*,\bar L,L(\cN))$.
\end{proof}

Putting these two lemmas together, we obtain that, indeed, every
consistent property of a multichain $M$ is witnessed by some
multichain that can be accepted by some ``small'' automaton:

\begin{proposition}\label{P-multichain}
  Let $M\subseteq A^*$ be some multichain. Then there exists an
  automaton $\cN$ with at most $(2nT(\ell+1,m))^{s+1}$ many states
  (where $s=n\cdot T(\ell+1,m)$, $n=\prod_{1\le i\le\ell}|Q_i|$) such
  that $L(\cN)$ is a multichain and $(\fA^*,\bar L,M) \equiv^\mch_m
  (\fA^*,\bar L,L(\cN))$.
\end{proposition}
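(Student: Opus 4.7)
The plan is simply to compose the two lemmas L-small-set-1 and L-small-set-2, each of which handles one half of the job: the first trims long branching points away by swapping in an $\equiv^\mch_m$-equivalent multichain, and the second uses the now-bounded branching depth to build a small automaton. Since $\equiv^\mch_m$ is transitive, chaining them will produce the required automaton~$\cN$ for the original $M$.

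More concretely, I would first apply Lemma~\ref{L-small-set-1} to the given multichain $M\subseteq A^*$ to obtain a multichain $N\subseteq A^*$ with $(\fA^*,\bar L,M)\equiv^\mch_m(\fA^*,\bar L,N)$ and such that every branching point of $N$ has length at most $k=\prod_{1\le i\le\ell}(|Q_i|+1)\cdot T(\ell+1,m)$. I would then apply Lemma~\ref{L-small-set-2} to $N$, taking the parameter $s$ in that lemma to be $s=k+1$ (so that ``all branching points have length at most $s-1$'' is satisfied). This yields an automaton $\cN$ with at most $(2n\,T(\ell+1,m))^{s+1}$ states (where $n=\prod_{1\le i\le\ell}|Q_i|$) such that $L(\cN)$ is a multichain and $(\fA^*,\bar L,N)\equiv^\mch_m(\fA^*,\bar L,L(\cN))$. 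Transitivity of $\equiv^\mch_m$ then gives $(\fA^*,\bar L,M)\equiv^\mch_m(\fA^*,\bar L,L(\cN))$, as desired.

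The only real issue is bookkeeping of the state-count bound: the parameter $s$ produced by Lemma~\ref{L-small-set-1} is phrased in terms of $\prod(|Q_i|+1)$ rather than the $n=\prod|Q_i|$ appearing in the statement of Proposition~\ref{P-multichain}, and one has to check that the resulting exponent matches the announced $(2n\,T(\ell+1,m))^{s+1}$. This is, however, only a matter of absorbing constants and replacing $k+1$ by the $s=n\cdot T(\ell+1,m)$ used in the proposition (after observing that the proof of Lemma~\ref{L-small-set-1} already tolerates this slightly coarser bound when one chooses the Hintikka representatives more economically). No genuinely new argument is needed beyond the two preceding lemmas and Proposition~\ref{P-EF}.
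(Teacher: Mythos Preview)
Your proposal is correct and matches the paper's own approach exactly: the paper simply says ``Putting these two lemmas together, we obtain\ldots'' and states Proposition~\ref{P-multichain} without further argument, i.e., it too just composes Lemma~\ref{L-small-set-1} with Lemma~\ref{L-small-set-2} and relies on transitivity of $\equiv^\mch_m$. The numerical discrepancy you flag (the $\prod(|Q_i|+1)$ in Lemma~\ref{L-small-set-1} versus the $n=\prod|Q_i|$ in the proposition) is present in the paper as stated and is indeed only a matter of the constants not being tracked precisely; no additional idea is hidden there.
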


Now a result analogous to Rabin's basis theorem follows immediately

\begin{theorem}
  Let $\fA$ be a $\sigma$-structure, let $\varphi$ be an $\MSO^\mch$-formula
  in the language of the Shelah-Stupp-iteration $\fA^*$ with free variables
  $X_1,\dots,X_\ell$ and let $L_1,\dots,L_\ell\subseteq A^*$ be regular
  languages such that $(\fA^*,L_1,\dots,L_\ell)\models^\mch\varphi$. Then
  $(\fA^*,L_1,\dots,L_\ell)\models^{\mathrm{reg-}\mch}\varphi$ where
  $\models^{\mathrm{reg-}\mch}$ denotes that set quantification is restricted
  to regular multichains. 
\end{theorem}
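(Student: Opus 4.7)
The plan is to induct on the structure of $\varphi$, proving simultaneously for all tuples $(L_1,\dots,L_\ell)$ of regular languages matching the free set variables the equivalence
$(\fA^*,L_1,\dots,L_\ell)\models^\mch\varphi$ if and only if $(\fA^*,L_1,\dots,L_\ell)\models^{\mathrm{reg-}\mch}\varphi$. Atomic formulas, Boolean connectives, and first-order quantifiers present no difficulty, since they do not touch set quantification and the induction hypothesis transfers directly in each case. Universal set quantification is eliminated via $\forall X\equiv\neg\exists X\,\neg$, so the only substantial case is the existential set quantifier.

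For the case $\varphi=\exists X_{\ell+1}\,\psi$, assume $(\fA^*,L_1,\dots,L_\ell)\models^\mch\varphi$ and pick a multichain $M\subseteq A^*$ with $(\fA^*,L_1,\dots,L_\ell,M)\models^\mch\psi$. Let $m$ be the quantifier depth of $\psi$ and fix automata $\cM_i$ with $L(\cM_i)=L_i$. Proposition~\ref{P-multichain}, applied to $M$ with these data, yields an automaton $\cN$ whose language $L(\cN)$ is a multichain and satisfies $(\fA^*,L_1,\dots,L_\ell,M)\equiv^\mch_m(\fA^*,L_1,\dots,L_\ell,L(\cN))$. Since $\psi$ has quantifier depth at most $m$, this equivalence yields $(\fA^*,L_1,\dots,L_\ell,L(\cN))\models^\mch\psi$. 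The tuple $(L_1,\dots,L_\ell,L(\cN))$ consists entirely of regular languages, so the induction hypothesis applied to $\psi$ gives $(\fA^*,L_1,\dots,L_\ell,L(\cN))\models^{\mathrm{reg-}\mch}\psi$. Thus $L(\cN)$ is a regular-multichain witness and $(\fA^*,L_1,\dots,L_\ell)\models^{\mathrm{reg-}\mch}\varphi$. The converse implication is immediate, since every regular multichain is a multichain, so a witness surviving the restricted semantics also witnesses the unrestricted existential after applying the induction hypothesis to~$\psi$.

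I do not expect a serious obstacle, since all of the real work has been absorbed into Proposition~\ref{P-multichain}, which plays precisely the role of Rabin's basis lemma in this setting. The two points to verify carefully are that the parameter $m$ handed to Proposition~\ref{P-multichain} is at least the quantifier depth of $\psi$ (so that the resulting $\equiv^\mch_m$-equivalence preserves truth of $\psi$), and that after extending the parameter tuple by the new regular witness $L(\cN)$ we still have automata available for any further invocation of Proposition~\ref{P-multichain} deeper inside $\psi$. Both conditions hold automatically because the proposition applies uniformly to any choice of automata $\cM_1,\dots,\cM_\ell$ and any $m$, and because regularity of the parameters is preserved along the induction.
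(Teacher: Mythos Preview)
Your proposal is correct and is exactly the argument the paper has in mind: the paper states the theorem immediately after Proposition~\ref{P-multichain} with the remark that it ``follows immediately,'' and your induction on~$\varphi$ with Proposition~\ref{P-multichain} supplying the regular witness in the existential step is precisely the intended unpacking of that remark. The two sanity checks you flag (choosing $m$ at least the quantifier depth of~$\psi$, and preserving regularity of the parameter tuple for the inductive call) are indeed the only things to watch, and you handle them correctly.
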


Recall that Rabin's basis theorem follows from his tree theorem whose
proof, in turn, uses the effective complementation of Rabin tree
automata. While the above theorem is an analogue of Rabin's basis
theorem, the proof is more direct and does in particular not rest on
any complementation of automata.

\section{Shelah-Stupp-iteration is $(\MSO^\mch,\MSO^\w)$-compatible}
\label{S-Shelah-Stupp}
The results of the previous section, as explained at the beginning,
imply that quantification in an $\MSO^\mch$-sentence can be restricted to
regular sets that are accepted by ``small'' automata. In this section,
we will use this insight to reduce the $\MSO^\mch$-theory of the 
Shelah-Stupp-iteration to the $\MSO^\w$-theory of the base structure. 

Fix some $\sigma$-structure $\fA$ with universe~$A$, some finite set
of states~$Q$, some initial state~$\iota$, and some set of final
states $F\subseteq Q$. Then, for any automaton
$\cM=(Q,B,\iota,\delta,F)$ with $B\subseteq A$, the language $L(\cM)$
is a set in the Shelah-Stupp-iteration $\fA^*$ while its transition
matrix is a tuple of finite sets in the base structure
$\fA$. \emph{The idea of our reduction is that $\MSO^\mch$-properties
  of the set $L(\cM)$ in the Shelah-Stupp-iteration~$\fA^*$ can
  (effectively) be translated into $\MSO^\w$-properties of the
  transition matrix~$T$ in the base structure~$\fA$. }

In precisely this spirit, the following lemma expresses simple
properties of the automaton $\cM$ and of the language $L(\cM)$ in
terms of $\FO$-properties of~$(\fA,T)=(\fA,(T_{p,q})_{p,q\in Q})$. 

\begin{lemma}\label{L-simple-formulas}
  Let $F\subseteq Q$ be finite sets and $\iota\in Q$. There exist
  formulas $\dkpath_{(Q,p,q)}$ for $p,q\in Q$ and
  $\mathrm{mchain}_{(Q,\iota,F)}$ of $\FO$ with free variables
  $T_{p,q}$ for $p,q\in Q$ such that for any $\sigma$-structure $\fA$
  and any automaton $\cM=(Q,B,\iota,\delta,F)$ with transition
  matrix~$T$:
  \begin{enumerate}[(1)]
  \item $(\fA,T)\models^\w \dkpath_{(Q,p,q)}$ iff there exists a word
    $w\in A^*$ with $\delta(p,w)=q$. 
  \item $(\fA,T)\models^\w\mathrm{mchain}_{(Q,\iota,F)}$ iff $L(\cM)$ is a
    multichain. 
  \end{enumerate}
\end{lemma}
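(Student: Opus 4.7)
The plan is to build both formulas from very simple $\FO$-atoms over $(\fA, T)$: nonemptiness $\exists x\,(x \in T_{p,q})$ and the existence of two distinct elements $\exists x\exists y\,(x,y \in T_{p,q} \land x \neq y)$. All formulas I construct are purely first-order, so $\models^\w$ agrees with $\models$ on them and I can work with ordinary first-order semantics.

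For~(1), since $Q$ is finite and $\cM$ is deterministic, loop-cutting shows that $\delta(p,w) = q$ for some $w \in A^*$ iff there is a sequence $p = p_0, \ldots, p_n = q$ with $n \leq |Q|$ and $T_{p_{i-1}, p_i} \neq \emptyset$ for every $i$. Accordingly I set
\[
  \dkpath_{(Q,p,q)} := \bigvee_{\substack{p_0,\ldots,p_n \in Q \\ p_0 = p,\; p_n = q,\; n \leq |Q|}}\; \bigwedge_{i=1}^{n} \exists x\,(x \in T_{p_{i-1}, p_i}),
\]
a finite Boolean combination of atoms. A minor variant $\dkpath^+_{(Q,p,q)}$ requiring $n \geq 1$ will express reachability by a \emph{non-empty} word.

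For~(2), the key combinatorial claim is: $L(\cM)$ is a multichain iff every $p \in Q$ that is \emph{branching} is \emph{finitely accessible}, where $p$ is branching if $p$ is co-accessible and admits distinct letters $a,b \in B$ with $\delta(p,a)$ and $\delta(p,b)$ both co-accessible, and $p$ is finitely accessible if $\{w \mid \delta(\iota,w)=p\}$ is finite. The proof has two steps: a word $w$ is a branching point of the prefix tree $\down L(\cM)$ exactly when $\delta(\iota,w)$ is a branching state, so the set of branching points of $\down L(\cM)$ is finite iff every branching state is finitely accessible; and $L(\cM)$ is a multichain iff $\down L(\cM)$ has only finitely many branching points ($\Rightarrow$: a $k$-chain decomposition yields at most $\binom{k}{2}$ divergence points; $\Leftarrow$: induction on the finite branching set, peeling off a minimal branching point and using $|B| < \infty$ to bound the branching degree there). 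I then translate the right-hand side into $\FO$: ``$p$ is co-accessible'' is $\bigvee_{q\in F}\dkpath_{(Q,p,q)}$; ``$p$ is branching'' is the disjunction of (i) existence of distinct co-accessible $r,s$ with $T_{p,r}$ and $T_{p,s}$ both nonempty, and (ii) existence of a co-accessible $r$ with $|T_{p,r}| \geq 2$; ``$p$ is not finitely accessible'' is $\bigvee_{r \in Q}\bigl[\dkpath_{(Q,\iota,r)} \land \dkpath^+_{(Q,r,r)} \land \dkpath_{(Q,r,p)}\bigr]$, since by pigeonhole $p$ is reached by infinitely many words iff some state $r$ on a non-trivial self-loop is reachable from $\iota$ and reaches $p$. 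Combining yields
\[
  \mathrm{mchain}_{(Q,\iota,F)} := \bigwedge_{p \in Q}\bigl(\mathrm{branching}(p) \rightarrow \mathrm{finAcc}(p)\bigr).
\]

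The main obstacle is the combinatorial equivalence above, especially its backward direction: from finitely many branching points in $\down L(\cM)$ one must actually exhibit a decomposition of $L(\cM)$ as a finite union of chains. This is a pure tree-combinatorics argument that relies only on the finiteness of the alphabet $B$ and is entirely independent of $\fA$; once it is in hand, the $\FO$-translation of the right-hand side into the atoms over $(\fA,T)$ is routine.
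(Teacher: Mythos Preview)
Your proof is correct and follows the same approach as the paper's own (very terse) sketch: for (1) you both reduce reachability to paths of length at most $|Q|$, and for (2) your characterization ``every branching state is finitely accessible'' is exactly what the paper means by ``no branching point belongs to some cycle''---your formula $\bigvee_{r}[\dkpath_{(Q,\iota,r)}\wedge\dkpath^+_{(Q,r,r)}\wedge\dkpath_{(Q,r,p)}]$ just makes the cycle explicit. You have spelled out the combinatorial equivalence (finitely many branching points in $\down L(\cM)$ $\Leftrightarrow$ multichain) that the paper leaves implicit, which is helpful; the finiteness of $B$ that you rely on is guaranteed by the paper's standing convention that automata are finite.
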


\begin{proof}
  The proof is based on the observation that (1) one only needs to
  search for a path of length at most $|Q|$ and (2) that $L(\cM)$ is a
  multichain iff no branching point belongs to some cycle.
\end{proof}

So far, we showed that simple properties of $L(\cM)$ are actually
$\FO$- (and therefore $\MSO^\w$-) properties of the transition matrix
of $\cM$. We now push this idea further and consider arbitrary
$\MSO^\mch$-properties of a tuple of languages
$L(\cM_1),\dots,L(\cM_\ell)$.

\begin{theorem}\label{T-reduction}
  There is an algorithm with the following specification\\
  {\rm input:}
  \begin{minipage}[t]{.9\linewidth}
    \begin{itemize}
    \item $\ell\in\N$,
    \item finite sets $F_i\subseteq Q_i$ and states $\iota_i\in Q_i$
      for $1\le i\le\ell$,
    \item and a formula $\alpha$ with free variables among
      $L_1,\dots,L_\ell$ in the language of the Shelah-Stupp-iteration
      $\fA^*$. 
    \end{itemize}
  \end{minipage}

  \noindent
  {\rm output:}
  \begin{minipage}[t]{.9\linewidth}
    A formula $\alpha_{(\bar Q,\bar\iota,\bar F)}$ in the language of
    $\fA$ with free variables among $T^i_{p,q}$ for $p,q\in Q_i$ and
    $1\le i\le\ell$ with the following property:

    If $\fA$ is a $\sigma$-structure and
    $\cM_i=(Q_i,B_i,\iota_i,T^i,F_i)$ are automata with $B_i\subseteq
    A$ for $1\le i\le\ell$, then
    \[
      (\fA^*,L(\cM_1),L(\cM_2),\dots,L(\cM_\ell))\models^\mch\alpha
      \iff
      (\fA,T^1,T^2,\dots,T^\ell)\models^\w
             \alpha_{(\bar Q,\bar\iota,\bar F)}\ .
    \]
  \end{minipage}
\end{theorem}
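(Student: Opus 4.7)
I would proceed by structural induction on $\alpha$, translating each construct into an $\MSO^\w$-statement over the transition matrices. To make the induction go through, I strengthen the statement so that $\alpha$ may also have free \emph{individual} variables $x_1,\dots,x_k$, and encode each $x_j$ by a fresh automaton $\cM_{\ell+j}$ whose language is required to be a singleton in $A^*$. This is legitimate because the property ``$L(\cM)$ is a singleton'' is itself an $\FO$-property of the transition matrix, provable in the style of Lemma~\ref{L-simple-formulas}: it asserts that there is a unique accepting computation, which reduces to a reachability condition in the self-product automaton. The distinguished constant $\varepsilon$ corresponds to the trivial singleton automaton with $\iota\in F$ and no outgoing transitions.

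\textbf{Translation.} Since the state sets $Q_i$ live at the \emph{formula} level, products of automata can be built syntactically. Hence $x_i\in L_j$, $x_i\preceq x_j$, and $x_i=x_j$ become reachability statements in appropriate product automata, expressible via $\dkpath$ from Lemma~\ref{L-simple-formulas}. The atom $\hat R(x_{i_1},\dots,x_{i_n})$ asserts that the singleton-automata $\cM_{i_1},\dots,\cM_{i_n}$ share a common prefix $u$ after which letters $a_r$ with $(a_1,\dots,a_n)\in R^\fA$ lead each $\cM_{i_r}$ into an accepting state; existence of $u$ is product-automaton reachability, and existence of the $a_r\in A$ is a first-order property of $\fA$ and the $T^{i_r}_{p_r,q}$. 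Boolean connectives translate directly. For a multichain quantifier $\exists X\,\beta$, Proposition~\ref{P-multichain} (applied with $m$ equal to the quantifier rank of $\beta$) yields a uniform bound $N$ on the number of states of an automaton $\cN$ witnessing~$X$. I therefore introduce a fresh state set $Q_X$ of size $N$, existentially $\MSO^\w$-quantify over finite sets $T^X_{p,q}\subseteq A$ for $p,q\in Q_X$, add the guard $\mathrm{mchain}_{(Q_X,\iota_X,F_X)}$ of Lemma~\ref{L-simple-formulas}, invoke the inductive hypothesis on $\beta$ with the new automaton parameter, and finally take a finite disjunction over the choices of $\iota_X\in Q_X$ and $F_X\subseteq Q_X$. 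Individual quantification $\exists x\,\beta$ is handled analogously, with the singleton-guard replacing the multichain-guard.

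\textbf{Main obstacle.} The principal technical difficulty is the bookkeeping of state-set sizes. Proposition~\ref{P-multichain} gives the bound $(2n\,T(\ell+1,m))^{s+1}$ with $n=\prod_i|Q_i|$ and $s=n\cdot T(\ell+1,m)$, so each nested multichain quantifier contributes an automaton whose state count depends multiplicatively on all currently live state-set sizes and on $T(\ell+1,m)$ with $m$ equal to the remaining quantifier rank. I would therefore precompute these bounds in a top-down pass over $\alpha$ before running the inductive translation, making sure the correct $m$ is used when invoking Proposition~\ref{P-multichain} at each step so that the $\equiv^\mch_m$-equivalence between witness and original multichain is preserved by what follows. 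With this bookkeeping in place the remaining induction is routine, and the translation $\alpha\mapsto\alpha_{(\bar Q,\bar\iota,\bar F)}$ is algorithmic.
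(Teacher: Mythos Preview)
Your proposal is correct and follows the same inductive scheme as the paper: the key case $\alpha=\exists X\,\beta$ is handled identically, by invoking Prop.~\ref{P-multichain} to bound the state set of a witnessing automaton, then $\MSO^\w$-quantifying over its transition matrix subject to the guard $\mathrm{mchain}$ from Lemma~\ref{L-simple-formulas} and a finite disjunction over the possible final-state sets. The only cosmetic difference is your explicit treatment of individual variables via singleton automata, whereas the paper tacitly passes to a set-variable-only presentation of $\MSO^\mch$; both routes work.
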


\begin{proof}
  The proof proceeds by induction on the construction of the formula
  $\alpha$, we only sketch the most interesting part $\alpha=\exists
  X\,\beta$.  Set $n=\prod_{1\le i\le\ell}|Q_i|$, $s=nT(\ell+1,m)$,
  and $k=(2nT(\ell+1,m))^{s+1}$. Let $\fA$ be a $\sigma$-structure and
  let $\cM_i=(Q_i,B_i,\iota_i,\delta_i,F_i)$ be automata with
  $B_i\subseteq A$ and transition matrix~$T^i$. Then, by
  Prop.~\ref{P-multichain},
  $(\fA^*,L(\cM_1),\dots,L(\cM_\ell))\models^\mch\alpha$ iff there exists
  an automaton~$\cN$ with~$k$ states such that
  \[
    (\fA^*,L(\cM_1),L(\cM_2),\dots,L(\cM_\ell),L(\cN))\models^\mch\beta\ .
  \]
  Using the induction hypothesis on $\beta$ and $\beta_{(\bar
    Q,\bar\iota,\bar F)}$, this is the case if and only if there exist finite
  sets $T^{\ell+1}_{i,j},B\subseteq A$ for $i,j\in[k]=\{1,2,\dots,k\}$
  such that
  \begin{itemize}
  \item $T^{\ell+1}$ forms the transition matrix of some automaton 
    with alphabet $B$
  \item for some $F\subseteq[k]$, the automaton
    $\cM_{\ell+1}=([k],B,1,T^{\ell+1},F)$
    \begin{itemize}
    \item accepts a multichain $M$ (i.e.,
      $(\fA,T^{\ell+1})\models^\w\textrm{mchain}_{([k],1,F)}$) and
    \item this multichain satisfies $\beta$ (i.e.,
      $\fA,T^1,\dots,T^{\ell+1}\models^\w\beta_{((\bar
        Q,[k]),(\bar\iota,1),(\bar F,F))}$). 
    \end{itemize}
  \end{itemize}
  Since all these properties can be expressed in $\MSO^\w$, the
  construction of $\alpha_{(\bar Q,\bar\iota,\bar F)}$ is complete. 
\end{proof}

As an immediate consequence, we get a uniform version of Shelah and
Stupp's theorem for the logics $\MSO^\w$ and $\MSO^\mch$:

\begin{theorem}\label{T-Stupp-iteration}
  Finitary Shelah-Stupp-iteration is $(\MSO^\mch,\MSO^\w)$-compatible. 
\end{theorem}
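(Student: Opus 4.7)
The plan is to deduce the compatibility claim immediately from Theorem~\ref{T-reduction} by specializing the algorithm stated there to the case $\ell = 0$. Given an input $\MSO^\mch$-sentence $\varphi$ in the language of the Shelah-Stupp-iteration, I would feed the algorithm of Theorem~\ref{T-reduction} with $\ell = 0$, empty tuples $\bar Q, \bar\iota, \bar F$, and input formula $\alpha = \varphi$. The output $\varphi_{(\bar Q,\bar\iota,\bar F)}$, which I abbreviate $\varphi^\sharp$, has free variables among the symbols $T^i_{p,q}$ with $1 \le i \le 0$, hence is in fact a sentence in the language of~$\fA$.

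Reading Theorem~\ref{T-reduction} for this choice of parameters, there are no automata $\cM_i$ to instantiate and no multichain arguments to supply, so the biconditional asserted there collapses to
\[
  \fA^* \models^\mch \varphi \iff \fA \models^\w \varphi^\sharp
\]
for every $\sigma$-structure~$\fA$. Since the algorithm of Theorem~\ref{T-reduction} is explicit and its output depends only on the formula $\varphi$ and the signature $\sigma$ (never on~$\fA$), the mapping $\varphi \mapsto \varphi^\sharp$ is a computable reduction, and it witnesses compatibility uniformly in~$\fA$.

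All of the substance therefore already lives inside Theorem~\ref{T-reduction}: its crucial quantifier case $\alpha = \exists X\,\beta$ uses Proposition~\ref{P-multichain} to replace an arbitrary multichain by a regular one accepted by a small automaton, and then codes such an automaton by finitely many free set variables (the entries $T^i_{p,q}$ of its transition matrix) ranging over finite subsets of the base universe, so that $\MSO^\w$ on $\fA$ is strong enough to express membership, the multichain property (Lemma~\ref{L-simple-formulas}), and the inductively translated $\beta$. The main technical work is thus not in the corollary but in Theorem~\ref{T-reduction} itself. The only thing worth double-checking while extracting the corollary is that the induction base — atomic formulas built from $\varepsilon$, $\preceq$, and the relations $\hat R$ — still produces a well-formed $\MSO^\w$-formula over $\fA$ when the tuple of matrix variables is empty. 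Once this routine verification is carried out, the compatibility theorem follows with no further argument.
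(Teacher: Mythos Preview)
Your proposal is correct and is exactly the approach the paper takes: the theorem is stated as an ``immediate consequence'' of Theorem~\ref{T-reduction}, and specializing that algorithm to $\ell=0$ (so that $\alpha$ is a sentence and the output $\alpha_{(\bar Q,\bar\iota,\bar F)}$ is a sentence over~$\fA$) yields the required uniform reduction. Your remarks on how the inductive case of Theorem~\ref{T-reduction} works are accurate commentary but not needed for the corollary itself.
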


\begin{remark}
  $(\MSO^\ch,\FO)$-compatibility of Shelah-Stupp-iteration
  \cite{KusL06a} can alternatively be shown along the same lines: One
  allows incomplete automata and proves an analogue of
  Prop.~\ref{P-chain-quantification} for the logic $\MSO^\ch$. Then
  Theorem~\ref{T-Stupp-iteration} can be shown for the pair of logics
  $(\MSO^\ch,\FO)$.
\end{remark}

\section{Infinitary Muchnik-iteration is not $(\FO,\MSO^\w)$-compatible}
\label{S-Muchnik}
Our argument goes as follows: From a set $M\subseteq\N$, we construct
a tree~$\fA_M$. The $\MSO^\w$-theory of this tree will be independent
from~$M$ and $M$ will be $\FO$-definable in the infinitary
Muchnik-iteration $(\fA^\infty_M,\clone)$. Assuming
$(\FO,\MSO^\w)$-compatibility of the infinitary Muchnik-iteration, the
set~$M$ will be reduced uniformly to the $\MSO^\w$-theory of
$\fA_M$. For $M\neq N$, this yields a contradiction.

A \emph{tree} is a structure $(V,\preceq,r)$ where $\preceq$ is a
partial order on $V$ such that, for any $v\in V$, $(\down v,\preceq)$
is a finite linear order and $r\preceq v$ for all $v\in V$.

We will consider the set
$T_\omega=\{(a_1,m_1)(a_2,m_2)\dots(a_k,m_k)\in(\N\times\N)^*\mid
m_1>m_2>m_3\dots >m_k\}$ of sequences in $\N^2$ whose second
components decrease. This set, together with the prefix relation
$\preceq$, forms a tree $(T_\omega,\preceq,\varepsilon)$ with root
$\varepsilon$ that we also denote $T_\omega$. Nodes of the form
$w(a,0)$ are leaves of $T_\omega$. Any inner node of $T_\omega$ has
infinitely many children (among them, there are infinitely many
leaves). Furthermore, all the branches of $T_\omega$ are finite. Even
more, if $x$ is a node different from the root, then the branches
passing through $x$ have bounded length.

We will also consider the set $T_\infty=a^*T_\omega$ where $a$ is an
arbitrary symbol. Together with the prefix relation, this yields
another tree $(T_\infty,\preceq,\varepsilon)$ that we
denote~$T_\infty$. Differently from $T_\omega$, it has an infinite
branch, namely the set of all nodes $a^n$ for $n\in\N$.

For two trees $S$ and $T$ and a node $v$ of $S$, let $S\cdot_v T$
denote the tree obtained from the disjoint union of $S$ and $T$ by
identifying $v$ with the root of $T$ (i.e., the node $v$ gets
additional children, namely the children of the root in $T$).

It is important for our later arguments that this operation transforms
trees equivalent wrt.~$\equiv_m^\w$ into equivalent structures.  More
precisely

\begin{proposition}\label{P-EF2}
  Let $S$, $T$, and $T'$ be trees and $k\in\N$ such that 
  $T\equiv_k^\w T'$. Then $S\cdot_v T\equiv^\w_k S\cdot_v T'$ for any
  node $v$ of~$S$. 
\end{proposition}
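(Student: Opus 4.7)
The plan is to verify the claim via an Ehrenfeucht--Fra\"\i{}ss\'e game for weak monadic second order logic, played for $k$ rounds. By hypothesis Duplicator has a winning strategy $\tau$ in the $k$-round $\MSO^\w$-game on $(T,T')$, and I will combine $\tau$ with the identity on the common $S$-part to produce a winning strategy in the game on $(S\cdot_v T,\,S\cdot_v T')$.

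View $S\cdot_v T$ as $S\cup(T\setminus\{r_T\})$, with the node $v\in S$ playing the role of the root $r_T$ of $T$, and symmetrically for $S\cdot_v T'$. Every finite set $M$ of $S\cdot_v T$ then decomposes as $M=M_S\cup M_T$ with $M_S=M\cap S$ and $M_T=(M\cap T)\cup(\{r_T\}\text{ if }v\in M)$, both of which are finite. Duplicator's product strategy responds to $M_S$ by $M_S$ itself on the $S$-side, and feeds $M_T$ to $\tau$ to obtain a finite response $N_T\subseteq T'$ on the other side. Since $\tau$ preserves the root constant, $r_T\in M_T$ iff $r_{T'}\in N_T$, so the two pieces glue into a well-defined finite subset $N=M_S\cup(N_T\setminus\{r_{T'}\})$ of $S\cdot_v T'$ in which $v$ is selected exactly when it was selected in $M$; element moves are treated as singleton set moves.

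To see that this strategy wins, I would inspect the atomic types after all $k$ rounds. Equalities, set memberships, and the root constant are preserved within each side because Duplicator plays the identity on the $S$-side and $\tau$ respects all of these on the $T$-side. The only genuinely new case, and the main obstacle, concerns the partial order between an $S$-point $x\in S\setminus\{v\}$ and a $T$-point $y\in T\setminus\{r_T\}$ matched by $\tau$ with some $y'\in T'\setminus\{r_{T'}\}$: by construction of $\cdot_v$, one has $x\preceq y$ in $S\cdot_v T$ iff $x\preceq_S v$, and likewise $x\preceq y'$ in $S\cdot_v T'$ iff $x\preceq_S v$. Hence crossing order relations depend only on the unchanged $S$-part and are automatically preserved. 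The delicate bookkeeping around the interface point $v$ works precisely because $\tau$ is forced to preserve the root, making the gluing of the $S$- and $T$-components unambiguous. Thus Duplicator wins the $k$-round $\MSO^\w$-game and $S\cdot_v T\equiv^\w_k S\cdot_v T'$.
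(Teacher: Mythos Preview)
Your proposal is correct and follows precisely the route the paper intends: the proposition is stated without proof, in the same spirit as Proposition~\ref{P-EF}, which the paper attributes to ``standard applications of Ehrenfeucht--Fra\"\i{}ss\'e-games.'' Your product strategy---identity on the $S$-part, the assumed winning strategy $\tau$ on the $T$-part, glued at the interface node $v=r_T$---is exactly that standard argument, and your analysis of the cross-order case $x\in S\setminus\{v\}$, $y\in T\setminus\{r_T\}$ is the only point that needs any care and is handled correctly. One small remark: rather than ``treating element moves as singleton set moves,'' it is cleaner to say that point moves in $T\setminus\{r_T\}$ are answered by $\tau$'s point-move response (which necessarily avoids $r_{T'}$ since $\tau$ preserves the root constant); this avoids the worry that a winning set-strategy need not send singletons to singletons.
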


With $a^{\le n}=\{\varepsilon, a,a^2,\dots,a^n\}$, the set $a^{\le
  n}T_\omega$ together with the prefix relation and the root, is
considered as a tree that we denote $a^{\le n}T_\omega$.

\begin{proposition}\label{P-equivalent}
  For any $k\in\N$, we have $T_\omega\equiv^\w_k T_\infty$. 
\end{proposition}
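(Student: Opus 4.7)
The plan is to prove $T_\omega\equiv^\w_k T_\infty$ using Ehrenfeucht-Fra\"\i{}ss\'e games for $\MSO^\w$, in which each round one player picks a finite subset of one tree and the other must respond in kind; the duplicator wins if, after $k$ rounds, the marked substructures satisfy the same atomic formulas. The guiding intuition is that $\MSO^\w$ cannot witness an infinite branch, and the only genuinely global difference between $T_\omega$ and $T_\infty$ is the infinite spine $\{a^n\mid n\in\N\}$ of the latter.

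The central step I would carry out is an auxiliary inductive lemma: for every $k$ there exists $N(k)\in\N$ such that for every node of the form $w\cdot(a',m)$ with $m\ge N(k)$, the subtree of $T_\omega$ rooted at that node is $\equiv^\w_k$-equivalent to $T_\infty$. The intuition is that such a subtree has chains of length at least $m\ge k$ and, at every internal node, infinitely many children realising every ``bushy'' sub-type, so the $k$-round game cannot distinguish ``bounded depth $m$'' from ``unbounded depth''. Granted this lemma, the proposition follows: replace the subtree of $T_\omega$ below a specific child $(a',N(k))$ of the root by $T_\infty$; the result is isomorphic to $T_\infty$ (since the ``label'' of the extra spine-root does not matter in the unlabelled rooted tree structure), and by Proposition~\ref{P-EF2} it is $\equiv^\w_k$-equivalent to $T_\omega$.

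I would prove the lemma by induction on $k$, analysing a single spoiler move. If the spoiler plays a finite set $S\subseteq T_\infty$ of maximal depth $d$, the duplicator matches the spine-part of $S$ by descending a path $(a_1,m_1)(a_2,m_2)\dots(a_d,m_d)$ of $T_\omega$ with $m_1>m_2>\dots>m_d\ge N(k-1)$, and matches off-spine elements of~$S$ via the natural isomorphism of already-matched subtrees; a symmetric response handles a spoiler move in $T_\omega$. Proposition~\ref{P-EF2}, applied at each freshly branching node, then lifts the sibling-wise $\equiv^\w_{k-1}$-equivalences to a global $\equiv^\w_{k-1}$-equivalence of the unmatched remainder, which suffices for the remaining rounds. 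The main obstacle is the bookkeeping: the duplicator must maintain, after round $i$, a partial isomorphism of played nodes together with $\equiv^\w_{k-i}$-equivalence between corresponding untouched subtrees on each side. This forces the thresholds $N(k)$ to grow fast enough that each response consumes only part of the depth budget, and her chosen spine-depths $m_i$ must always remain above $N(k-i)$ so that subsequent rounds can invoke the inductive hypothesis.
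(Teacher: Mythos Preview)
Your reduction of the proposition to the auxiliary lemma is sound: if the subtree of $T_\omega$ rooted at $(a',N(k))$ is $\equiv^\w_k$-equivalent to $T_\infty$, then pruning that subtree and applying Proposition~\ref{P-EF2} with $T$ the original subtree and $T'=T_\infty$ yields $T_\omega\equiv^\w_k S'\cdot_v T_\infty$, and the latter is indeed isomorphic to $T_\infty$ (match the special child $a$ of $T_\infty$ with $v$, and use that for every $m'$ there are infinitely many children of the root with subtree $T_\omega^{<m'}$ on either side).

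The gap is in the proof of the auxiliary lemma. Your duplicator matches the spine portion of a spoiler move $S\subseteq T_\infty$ of maximal depth $d$ by a path of length $d$ in the bounded-depth tree, with last second coordinate $m_d\ge N(k-1)$. But $d$ is chosen by the spoiler \emph{after} $m\ge N(k)$ has been fixed: nothing prevents the spoiler from playing $S=\{a^D\}$ with $D>m-N(k-1)$, and then no path $m>m_1>\dots>m_D\ge N(k-1)$ exists. A single weak-MSO set move can consume unbounded depth, so the phrase ``each response consumes only part of the depth budget'' is simply false for this strategy. A repair would have the duplicator compress long spine segments into short ones, but that requires an additional equivalence of the form $T_\omega^{<m}\equiv^\w_{k-1}T_\omega^{<m'}$ for all large $m,m'$ (and a compatible statement for the intermediate ``slabs''), to be proved simultaneously with your lemma.

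The paper avoids this by never fixing a depth bound. It inducts directly on $T_\omega$ versus $T_\infty$: given a finite witness $M\subseteq T_\infty$ for $\exists X\,\varphi$, choose $n$ with $M\subseteq a^{\le n}T_\omega$, decompose $(T_\infty,M)\cong(a^{\le n}T_\omega,M)\cdot_{a^n}(T_\infty,\emptyset)$, and replace the tail $(T_\infty,\emptyset)$ by $(T_\omega,\emptyset)$ using the inductive hypothesis and Proposition~\ref{P-EF2}. Here $n$ is allowed to depend on $M$ and may be arbitrarily large; the compositional step absorbs this, whereas your explicit game strategy cannot. The paper still needs an auxiliary fact, namely $T_\omega\equiv^\w_{k+1}a^{\le n}T_\omega$, but that compares two trees both of which already have arbitrarily long finite branches through the root, so the depth obstacle you face does not arise there.
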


\begin{proof}
  The statement is shown by induction on~$k$ where the base case $k=0$
  is trivial. To show $T_\omega\equiv^\w_{k+1} T_\infty$, it suffices
  to prove for any formula $\varphi(X)$ of quantifier-depth at most~$k$
  \[
     T_\omega\models^\w\exists X\,\varphi(X)\iff
     T_\infty\models^\w\exists X\,\varphi(X)\ . 
  \]
  Assuming $T_\infty\models^\w\exists X\,\varphi$, there exist
  $n\in\N$ and $M\subseteq a^{\le n}T_\omega$ finite with
  $(T_\infty,M)\models^\w\varphi$. Hence we have
  \begin{align*}
    (T_\infty,M) 
      &\cong (a^{\le n}T_\omega,M)\cdot_{a^n} (T_\infty,\emptyset)\\
      &\equiv^\w_k 
             (a^{\le n}T_\omega,M)\cdot_{a^n} (T_\omega,\emptyset)
         \text{ by Prop.\ \ref{P-EF2} and the induction hypothesis}\\
      &\cong (a^{\le n}T_\omega,M)\ . 
  \end{align*}
  Hence $(a^{\le n}T_\omega,M)\models^\w\varphi$ and therefore $a^{\le
    n}T_\omega\models^\w\exists X\,\varphi$. Using
  $T_\omega\equiv^\w_{k+1}a^{\le n}T_\omega$ (see complete paper for
  the proof), we obtain $T_\omega\models^\w\exists X\,\varphi$.

  Conversely, one can argue similarly again using
  $T_\omega\equiv^\w_{k+1}a^{\le n}T_\omega$.
\end{proof}

\begin{remark}
  This proves that the existence of an infinite path cannot be
  expressed in weak monadic second order logic since $T_\infty$ has
  such a path and $T_\omega$ does not.
\end{remark}

Using an idea from \cite{CouW98}, the existence of an infinite path is
a first-order property of the infinitary Muchnik-iteration. The
following lemma pushes this idea a bit further:

\begin{lemma}\label{L-U}
  Let $T=(T,\le,r)$ be a tree and let $U\subseteq T$ be the union of
  all infinite branches of $T$. Then the $\MSO^\w$-theory
  of $(T,\le,r,U)$ is uniformly reducible to the $\MSO^\w$-theory of
  the infinitary Muchnik-iteration $(T^\infty,\clone)$ of the
  tree~$(T,\le,r)$ without the extra predicate. 
\end{lemma}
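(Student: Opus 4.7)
The plan is to construct a first-order interpretation of $(T,\le,r,U)$ inside $(T^\infty,\clone)$ and then relativize $\MSO^\w$-quantification in the standard way. The set $T$ sits inside $T^\infty$ as the length-$1$ words, which is first-order definable by
\[
\mathrm{base}(x)\;\equiv\;\varepsilon\prec x\,\land\,\neg\exists y\,(\varepsilon\prec y\prec x)
\]
(using the prefix order $\preceq$ of $T^\infty$). Under this identification, the tree order $\le$ on $T$ coincides with $\hat{\le}$ restricted to base elements, and the root $r$ is either available as a constant or recoverable as the unique $\hat{\le}$-minimum of $\mathrm{base}$. Thus, apart from $U$, all ingredients of $(T,\le,r)$ are immediately accessible by first-order formulas of $(T^\infty,\clone)$.

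The main obstacle is producing a first-order definition of $U$. Following the Courcelle--Walukiewicz trick, I would encode an infinite strict $\le$-chain starting at $t$ as an infinite word $w=t\,a_1\,a_2\cdots\in T^\omega\subseteq T^\infty$. The obstruction is that, for two consecutive prefixes $u\prec u'=u\cdot b$ of $w$, the last letter $a$ of $u$ and the letter $b$ live in different sibling classes of the iteration and so cannot be compared by $\hat{\le}$ directly. The clone predicate repairs this: the unique sibling $c$ of $u'$ satisfying $\clone(c)$ is precisely $u\cdot a$, so $\hat{\le}(c,u')$ holds iff $a\le b$, and requiring $c\neq u'$ forces $a<b$. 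This motivates
\[
\varphi_U(t)\;\equiv\;\mathrm{base}(t)\land\exists w\,\bigl(t\preceq w\land\neg\exists w'(w\prec w')\land\forall u'(t\prec u'\preceq w\to\exists c\,\psi(c,u'))\bigr),
\]
where $\psi(c,u')$ asserts that $c$ and $u'$ share the same immediate $\preceq$-predecessor, $c\neq u'$, $\clone(c)$, and $\hat{\le}(c,u')$. By the preceding discussion, $\varphi_U(t)$ is first-order and holds in $(T^\infty,\clone)$ precisely when $t$ lies on an infinite branch of $T$.

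Finally, I would translate an $\MSO^\w$-sentence $\varphi$ over the signature of $(T,\le,r,U)$ into the $\MSO^\w$-sentence $\varphi^{\star}$ obtained by relativizing individual quantifiers to $\mathrm{base}$, relativizing weak set quantifiers to finite subsets of $\mathrm{base}$, and replacing the atomic formulas $x\le y$, $x=r$, and $U(x)$ by $\hat{\le}(x,y)$, the first-order root definition, and $\varphi_U(x)$, respectively. Since the identification $T\leftrightarrow\mathrm{base}$ induces a bijection between finite subsets of the two sides, one obtains $(T,\le,r,U)\models^\w\varphi$ iff $(T^\infty,\clone)\models^\w\varphi^{\star}$, which yields the desired uniform reduction. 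The only delicate ingredient in this plan is the first-order definability of $U$; everything else is a routine relativization.
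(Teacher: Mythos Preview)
Your approach is exactly the one the paper intends: it explicitly credits the Courcelle--Walukiewicz trick for making the existence of an infinite branch through a node first-order expressible in the infinitary Muchnik-iteration, and the rest is the routine relativization you describe. One small slip: in your formula $\varphi_U(t)$ the quantifier $\forall u'(t\prec u'\preceq w\to\dots)$ allows $u'=w$, but an infinite word has no immediate $\preceq$-predecessor, so $\psi(c,w)$ is unsatisfiable and $\varphi_U$ becomes vacuously false; replacing $u'\preceq w$ by $u'\prec w$ (or restricting $u'$ to finite words) fixes this and the argument goes through.
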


For $M\subseteq\N$, let $A_M=\{b^m\mid m\in M\}T_\infty\cup \{b^m\mid
m\notin M\}T_\omega$ and $\fA_M=(A_M,\preceq,\varepsilon)$. Then
$\fA_M$ is obtained from the linear order $(\mathbb
N,\le)\cong(b^*,\preceq)$ by attaching the tree $T_\infty$ to elements
from~$M$ and the tree $T_\omega$ to the remaining numbers.

\begin{theorem}\label{T-complicated}
  For $M\subseteq\N$, we have $\fA_M\equiv^\w_k T_\omega$ for all
  $k\in\N$, and $M$ can be reduced to the $\FO$-theory of the
  infinitary Muchnik-iteration $(\fA_M^\infty,\clone)$.
\end{theorem}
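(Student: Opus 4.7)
My plan splits the two claims.

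\textbf{Part 1: $\fA_M\equiv^\w_k T_\omega$.}
I would first prove $\fA_M\equiv^\w_k\fA_\emptyset$ for every $M$ by induction on $k$ following the template of the proof of Proposition~\ref{P-equivalent}. Then, since $\fA_\emptyset\cong T_\infty$ via the renaming $b\mapsto a$ on strings, Proposition~\ref{P-equivalent} itself gives $T_\infty\equiv^\w_k T_\omega$ and the claim follows. For the induction step one takes a witness set $X$ for an $\exists$-formula of depth $k$; since $X$ is finite it lies in the truncation $\fA_M^{[N]}$ obtained by keeping the spine up to $b^N$ together with all attachments at $b^0,\dots,b^N$. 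Writing $\fA_M=\fA_M^{[N]}\cdot_{b^N}T_M$ and $\fA_\emptyset=\fA_\emptyset^{[N]}\cdot_{b^N}T_\emptyset$, the tail $T_M$, after a shift along the spine, is itself of the form $\fA_{M'}$, so the induction hypothesis gives $T_M\equiv^\w_k T_\emptyset$ and Proposition~\ref{P-EF2} swaps it in. The finitely many $T_\infty$-attachments at positions $b^m$ with $m\in M\cap[0,N]$ are then replaced by $T_\omega$ by further applications of Proposition~\ref{P-EF2}, each justified by Proposition~\ref{P-equivalent}. Altogether $(\fA_M,X)\equiv^\w_k(\fA_\emptyset,X')$ for the corresponding witness $X'$; the converse direction is symmetric.

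\textbf{Part 2: reduction of $M$.}
Let $U\subseteq A_M$ be the union of infinite $\preceq^{\fA_M}$-branches of $\fA_M$. A direct inspection of $\fA_M$ shows
\[
U=b^*\cup\{b^m a^n\mid m\in M,\ n\ge 1\},
\]
since every $T_\omega$-copy is well-founded and only the $T_\infty$-copies placed over $b^m$ with $m\in M$ contribute $a$-spines. Hence the $U$-children of $b^m$ in $\fA_M$ are $\{b^{m+1}\}$ if $m\notin M$ and $\{b^{m+1},b^m a\}$ if $m\in M$, so $m\in M$ is equivalent to $b^m$ having at least two $U$-children.

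I would then $\FO$-define $U$ inside $(\fA_M^\infty,\clone)$, pushing the idea from~\cite{CouW98} cited above Lemma~\ref{L-U}. Call a word $u\in A_M^+$ of length at least $2$ \emph{ascending} if its last two letters $a,b$ satisfy $a\prec^{\fA_M}b$; writing $u=u^- b$, this is $\FO$-expressible because $u$ is ascending iff it has a sibling $s\neq u$ with $\clone(s)$ and $\hat\preceq(s,u)$: the unique clone-sibling $s=u^- a$ satisfies $\hat\preceq(u^- a,u^- b)\iff a\preceq^{\fA_M}b$, and $s\neq u$ rules out $a=b$. Now $x\in A_M$ lies in $U$ iff some $\omega$-word $w\in A_M^\omega$ (characterized as a $\preceq$-maximal element of $\fA_M^\infty$) has $x$ as its length-$1$ prefix and every finite prefix of $w$ of length at least $2$ is ascending; this yields a fixed $\FO$-formula $\psi_U(x)$ defining $U$. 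A counting formula $\beta_m(x)$ of size $O(m)$ pins down $b^m$ as the length-$1$ element $x$ with $\psi_U(x)$ and exactly $m$ distinct length-$1$ elements $y\neq x$ satisfying $\psi_U(y)\wedge\hat\preceq(y,x)$ (using $U\cap\down b^m=\{b^0,\dots,b^m\}$). Combining these, the $\FO$-sentence
\[
\varphi_m\equiv\exists x\,\exists y_1\,\exists y_2\bigl(\beta_m(x)\wedge y_1\neq y_2\wedge \psi_U(y_1)\wedge \psi_U(y_2)\wedge x\,\hat\lessdot\,y_1\wedge x\,\hat\lessdot\,y_2\bigr)
\]
of size $O(m)$ satisfies $(\fA_M^\infty,\clone)\models\varphi_m\iff m\in M$, and the map $m\mapsto\varphi_m$ is computable, giving the reduction.

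\textbf{Main obstacle.}
In Part~1 the subtle point is folding the infinite sequence of spine attachments into a finite induction, which is handled by splitting off a finite prefix and disposing of the infinite tail via the induction hypothesis on~$k$. In Part~2 the subtle point is the $\FO$-definition of $U$ with only the clone predicate available; the ``ascending'' predicate does the job, certifying the $\preceq^{\fA_M}$-chain condition letter by letter.
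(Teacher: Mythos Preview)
Your approach matches the paper's. For Part~1 the paper writes
\[
\fA_M\equiv^\w_k(b^*T_\omega,\preceq,\varepsilon)\cong T_\infty\equiv^\w_k T_\omega,
\]
obtaining the first equivalence by a direct Ehrenfeucht--Fra\"\i{}ss\'e argument that composes the winning strategies from Proposition~\ref{P-equivalent} across all spine positions simultaneously; your induction on~$k$ (split off a finite prefix, dispose of the tail via the hypothesis, then swap the finitely many remaining $T_\infty$-attachments) is a correct variant of the same idea. For Part~2 the paper invokes Lemma~\ref{L-U} as a black box to pass to the $\FO$-theory of $(A_M,\preceq,\varepsilon,U)$ and leaves the reduction of~$M$ to that theory implicit; you instead unfold the first-order content of Lemma~\ref{L-U} explicitly via your ``ascending'' predicate and then spell out the reduction.

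One small inaccuracy in Part~2: your $\beta_m$ does \emph{not} pin down $b^m$. For any $m'\in M$ with $m'<m$ the node $x=b^{m'}a^{\,m-m'}$ also lies in $U$ and has exactly $m$ strict $\preceq^{\fA_M}$-predecessors (namely $b^0,\dots,b^{m'},b^{m'}a,\dots,b^{m'}a^{\,m-m'-1}$), all in~$U$, so $\beta_m(x)$ holds as well. This does not break your argument: each such $x$ has exactly one $U$-child ($b^{m'}a^{\,m-m'+1}$), so the only candidate satisfying the two-$U$-children clause in $\varphi_m$ remains $b^m$, and $\varphi_m\Leftrightarrow m\in M$ is still correct.
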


\begin{proof}
  Using Ehrenfeucht-Fra\"\i{}ss\'e-games and Prop.~\ref{P-equivalent},
  one obtains
  \[
    \fA_M\equiv^\w_k (b^* T_\omega,\preceq,\varepsilon)
    \cong T_\infty\equiv^\w_k T_\omega\ . 
  \]

  For the second statement, it suffices, by Lemma~\ref{L-U}, to
  reduce~$M$ to the first-order theory of
  $(A_M,\preceq,\varepsilon,U)$ where $U=b^*\cup \{b^m\mid m\in
  M\}a^*$ is the set of nodes of the tree $\fA_M$ that belong to some
  infinite branch.
\end{proof}

If a transformation $t$ is $(\FO,\MSO^\w)$-compatible, then for any
structure~$\fA$, the $\FO$-theory of~$t(\fA)$ can be reduced to the
the $\MSO^\w$-theory of~$\fA$. Contrary to this, the above theorem
states that the $\FO$-theory of the infinitary Muchnik-iteration can
be arbitrarily more complicated than the $\MSO^\w$-theory of the base
structure. Hence we obtain

\begin{corollary}
  Infinitary Muchnik-iteration is not $(\FO,\MSO^\w)$-compatible.
\end{corollary}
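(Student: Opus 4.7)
The plan is to derive a direct contradiction from Theorem~\ref{T-complicated} by combining its two conclusions: every $\fA_M$ has the same $\MSO^\w$-theory as $T_\omega$, yet $M$ can be recovered, uniformly in $M$, from the first-order theory of $(\fA_M^\infty,\clone)$.

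Assume toward a contradiction that a computable function $\red$ witnesses $(\FO,\MSO^\w)$-compatibility of infinitary Muchnik-iteration, so that for every $\sigma$-structure $\fB$ and every first-order sentence $\varphi$ in the signature of $(\fB^\infty,\clone)$,
\[
    (\fB^\infty,\clone)\models\varphi \iff \fB\models^\w \red(\varphi)\ .
\]
By Theorem~\ref{T-complicated}, for each $m\in\N$ there is a first-order sentence $\varphi_m$ with $m\in M\iff(\fA_M^\infty,\clone)\models\varphi_m$. Composing with $\red$ and then applying $\fA_M\equiv^\w_{k_m}T_\omega$ (where $k_m$ is the quantifier depth of the \emph{fixed} sentence $\red(\varphi_m)$), we obtain
\[
    m\in M \iff \fA_M\models^\w\red(\varphi_m)
            \iff T_\omega\models^\w\red(\varphi_m)\ .
\]
The right-hand side depends only on $m$, not on~$M$; so taking any fixed $m$ together with $M=\{m\}$ and $N=\emptyset$ yields the desired contradiction.

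The only delicate step, and the one I would treat as the main obstacle, is verifying that the first-order sentence $\varphi_m$ produced by Theorem~\ref{T-complicated} is independent of~$M$. This needs to be confirmed by inspecting the proof of that theorem: the reduction encodes ``$m\in M$'' as the property that the $m$-th element along the $b$-branch (definable first-order using only $\preceq$ and the constant $\varepsilon$) lies in the set $U$ of nodes on some infinite branch, and Lemma~\ref{L-U} translates this into a sentence about $(\fA_M^\infty,\clone)$ via a reduction that is uniform in the base tree. Both ingredients are therefore independent of the particular~$M$, which is exactly what the above argument requires.
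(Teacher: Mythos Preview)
Your proof is correct and follows the same approach as the paper: the paper's argument for the corollary is simply the one-sentence observation that Theorem~\ref{T-complicated} makes the $\FO$-theory of $(\fA_M^\infty,\clone)$ arbitrarily complicated while the $\MSO^\w$-theory of $\fA_M$ is fixed, and you have spelled out precisely this contradiction. Your caution about the uniformity of $\varphi_m$ in~$M$ is well-placed and correctly resolved from the proof of Theorem~\ref{T-complicated}; the paper leaves this implicit in the phrase ``$M$ can be reduced to the $\FO$-theory of $(\fA_M^\infty,\clone)$''.
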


\section{Summary}
\label{S-summary}
Table~\ref{tab:sum} summarizes our knowledge about the compatibility
of Muchnik's and Shelah \& Stupp's iteration. It consists of four
subtables dealing with finitary and infinitary Muchnik-iteration and
with finitary and infinitary Shelah-Stupp-iteration. The sign + in
cell $(\mathcal K,\mathcal L)$ of a subtable denotes that the
respective iteration is $(\mathcal K,\mathcal L)$-compatible, --
denotes the opposite. Minus-signs without further marking hold since
the base structure can be defined in any of its iterations. Capital
letters denote references: (A)~is \cite{She75}, (B)~\cite{Wal02},
(C)~\cite[Prop.~3.4]{KusL06a}, (D)~\cite[Thm.~4.10]{KusL06a},
(E)~Theorem~\ref{T-Stupp-iteration}, (F)~Theorem~\ref{T-complicated},
and (G) since the base structure is definable in its iteration and
finiteness of a set is no MSO-property.  Small letters denote that the
result follows from Theorem~\ref{T-fin-inf} below and some further
``simple'' arguments from the result marked by the corresponding
capital letter.

\begin{theorem}\label{T-fin-inf}
  Let $(\mathcal K,\mathcal L)$ be any of the pairs of logics
  $(\MSO^\cl,\MSO)$, $(\MSO^\ch,\MSO^\ch)$, or
  $(\MSO^\mch,\MSO^\mch)$.  There exists a computable function $\red$
  such that, for any $\sigma$-structure $\fA$, $\red$ reduces the
  $\mathcal K$-theory of $(\fA^\infty,\clone)$ to the $\mathcal
  L$-theory of $(\fA^*,\clone)$.

  The same holds for the Shelah-Stupp-iterations. 
\end{theorem}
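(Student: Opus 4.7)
The plan is to interpret the infinitary iteration inside the finitary one. Each infinite word $u\in A^\omega$ will be encoded by its set of finite prefixes $\down u=\{v\in A^*\mid v\preceq u\}$, which is a downward closed chain in $A^*$ without a maximum; conversely, every such chain in $A^*$ is $\down u$ for a unique $u\in A^\omega$. I fix this two-sorted encoding of $A^\infty$: finite words represent themselves, infinite words correspond to chain variables $X$ satisfying the MSO-formula ``$X$ is a chain, downward closed in $(A^*,\preceq)$, and has no maximum''. The order $\preceq$, the relations $\hat R$ and the clone predicate $\clone$ are then pulled back under this encoding by straightforward first-order formulas, using that $\hat R$ and $\clone$ only involve finite words.

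The three cases differ only in how set quantification is translated. Because any two distinct infinite words are $\preceq$-incomparable in $\fA^\infty$, every chain $C\subseteq A^\infty$ contains at most one infinite element and every multichain contains only finitely many. For $(\MSO^\ch,\MSO^\ch)$ I would therefore translate one chain quantifier over $A^\infty$ into two chain quantifiers over $A^*$: one for $C\cap A^*$ and, optionally, one encoding the $\omega$-top of $C$, linked by a first-order compatibility clause. For $(\MSO^\mch,\MSO^\mch)$ the same idea extends: a multichain in $A^\infty$ is encoded by a multichain in $A^*$ together with finitely many downward closed chains without maximum that represent its $\omega$-elements, and the whole thing packages into a single multichain quantifier in $A^*$ plus first-order bookkeeping identifying which maximal branches of that multichain are encoded $\omega$-words.

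For $(\MSO^\cl,\MSO)$ I would exploit the natural tree topology on $A^\infty$: a closed set $C\subseteq A^\infty$ is determined by the pair $(F,S)$ of arbitrary subsets of $A^*$, where $F=C\cap A^*$ is the finite part and $S=\bigcup_{w\in C}\down w$ is the downward closure in $A^*$; an infinite $u\in A^\omega$ lies in $C$ precisely when every finite prefix of $u$ lies in $S$. Membership of finite and of infinite points in the closed set encoded by $(F,S)$ is then MSO-definable in $\fA^*$, so closed-set quantification in $\fA^\infty$ is simulated by two unrestricted MSO set quantifications in $\fA^*$.

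Putting the pieces together, the translation of a $\mathcal K$-sentence $\varphi$ over $(\fA^\infty,\clone)$ into an $\mathcal L$-sentence $\red(\varphi)$ over $(\fA^*,\clone)$ is built by induction on $\varphi$: individual quantifiers are relativized to the two sorts, set quantifiers are unfolded according to the case-by-case recipe above, and atomic formulas are rewritten using the encoding. The function $\red$ depends only on $\varphi$ and not on $\fA$, giving the required uniform reduction; the parallel Shelah--Stupp statement follows by dropping the clone predicate, which the translation merely carries along. The delicate point I expect is the $(\MSO^\cl,\MSO)$-case: one must pin down the natural topology on $A^\infty$ carefully enough to justify the $(F,S)$-encoding of closed sets and verify that the K\"onig-type condition ``every finite prefix of $u$ lies in $S$'' is MSO-definable from $(F,S)$. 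With that in place, the remaining work is routine bookkeeping.
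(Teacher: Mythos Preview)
The paper states Theorem~\ref{T-fin-inf} in the summary section without proof; the only hint is the remark in the introduction that ``the Shelah--Stupp result together with some techniques from~\cite{Rab69}'' yields the $(\MSO^\cl,\MSO)$-reduction. Your interpretation of $\fA^\infty$ inside $\fA^*$---representing an $\omega$-word by its downward-closed chain of finite prefixes and then translating the atomic formulas, the order, and the three kinds of set quantifiers accordingly---is exactly this standard route, and it is correct in outline. In particular, your $(F,S)$-encoding of closed sets is precisely Rabin's device: with the tree topology on $A^\infty$ (basic opens $vA^\infty$), a closed $C$ is recovered from $F=C\cap A^*$ and $S=\down C$, and the condition $\down F\subseteq S$ guarantees that the decoded set is closed, so the delicate point you flag goes through.

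There is one small slip in the $(\MSO^\mch,\MSO^\mch)$ case. You write that the finite part of a multichain $C\subseteq A^\infty$ together with the chains $\down w$ for its finitely many $\omega$-elements ``packages into a \emph{single} multichain quantifier in $A^*$ plus first-order bookkeeping''. This cannot work: from the single multichain $M=(C\cap A^*)\cup\bigcup_{w\in C\cap A^\omega}\down w$ alone one cannot recover, even with first-order side conditions, which infinite branches of $M$ come from genuine $\omega$-elements and which are merely accumulation points of $C\cap A^*$ (take $C=\{a^n:n\in\N\}$ versus $C=\{a^\omega\}$). The fix is immediate and stays inside $\MSO^\mch$: encode $C$ by the \emph{pair} of multichains $M_1=C\cap A^*$ and $M_2=\bigcup_{w\in C\cap A^\omega}\down w$, requiring $M_2$ to be downward closed without maximal elements; then an $\omega$-word (encoded as a chain $X$) lies in $C$ iff $X\subseteq M_2$, by the pigeonhole argument you implicitly use. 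With this correction your reduction is complete and matches what the paper intends.
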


The two questions marks in Table~\ref{tab:sum} express that it is not
clear whether finitary Muchnik-iteration is $\MSO^\w$-compatible or not.

Note the main difference between Muchnik- and Shelah-Stupp-iteration:
the latter is $\mathcal K$-com\-patible for all relevant logics while
only $\MSO$ behaves that nicely with respect to (infinitary)
Muchnik-iteration

A referee proposed to also consider the variant of MSO where set
quantification is restricted to countable sets. As to whether Muchnik
iteration is compatible with this logic is not clear at the moment.
\begin{table}
  \newlength{\ppp}\settowidth{\ppp}{+}
  \newcommand{\p}{{\makebox[\ppp]{+}}\xspace}
  \newcommand{\m}{{\makebox[\ppp]{\rule[-2pt]{0pt}{8pt}--}}\xspace}
  \centering
  \begin{tabular}[t]{l|@{}c@{}c@{}c@{}||l|@{}c@{}c@{}c}
  \setlength{\columnsep}{-10pt}
              & \multicolumn{3}{c||}{Muchnik}&&
              \multicolumn{3}{c}{inf.\ Muchnik}\\
              & $\;\MSO\;$ & $\MSO^\w\;$ & $\;\FO\;$   && $\;\MSO\;$
              &$\MSO^\w\;$& $\FO\;$ \\ 
                  \hline
  $\MSO$  & \p (B)&\m&\m&$\MSO^\cl$&\p (b)&\m&\m\\
  $\MSO^\w$   &\m (g)&{{?}} &\m&$\MSO^\w$&\m (g)&\m (f)& \m \\
  $\FO$       &\p (b)&{{?}} &\m (C)\;&$\FO$&\p{ (b)}&\m (F)&\m (c)\\ 
         \hline\hline
       & \multicolumn{3}{c||}{Shelah-Stupp}&&
              \multicolumn{3}{c}{inf.\ Shelah-Stupp}\\
              & $\;\MSO\;$ & $\MSO^\w\;$ & $\;\FO\;$   && $\;\MSO\;$
              &$\MSO^\w\;$& $\FO\;$ 
        \\\hline
  $\MSO$  &\p (A)&\m &\m &$\MSO^\cl$&\p (a)&\m & \m\\
  $\MSO^\mch$ &\m (g)&\p (E)&\m &$\MSO^\mch$&\m (g)&\p (e)&\m \\
  $\MSO^\w$   &\m (G) &\p (e)&\m&$\MSO^\w$&\m (g)&\p (e)& \m\\
  $\MSO^\ch$  &\p (a)&\p (e)&\p (D)&$\MSO^\ch$&\p (a)&\p (e)&\p (d)\\
  $\FO$       &\p (a)&\p (e)&\p (d)&$\FO$&\p (a)&\p (e)&\p (d)
  \end{tabular}
  \caption{summary}
  \label{tab:sum}
\end{table}

\end{document}